\title{\textbf{%Hardness of detecting small subgraphs in CONGEST networks
Detecting cliques in \textsf{CONGEST} networks}\thanks{Research partially supported by the Centre for Discrete Mathematics and its Applications (DIMAP), by EPSRC award EP/D063191/1, and by EPSRC award EP/N011163/1.
%C.K. carried out most work on this paper while being at the University of Warwick. He was supported by the Centre for Discrete Mathematics and its Applications (DIMAP) at Warwick University and by EPSRC award EP/N011163/1.
}}
\def\zeit{\number\shorthour:\ifnum\shortminute<10 0\number\shortminute
\else\number\shortminute\fi}
\let\oldproofname=\proofname
\renewcommand{\proofname}{\rm\bf{\oldproofname}}
\newtheorem{theorem}{Theorem}
\newtheorem{lemma}[theorem]{Lemma}
\newtheorem{definition}[theorem]{Definition}
\newcommand{\Exp}{\mathbb{E}}
\newcommand{\Var}{\mathbb{V}}
\renewcommand{\Pr}{\mathbb{P}}
\newcommand{\Order}{\mathrm{O}}
\renewcommand{\Order}{\ensuremath{\mathcal{O}}}
\newcommand{\OrderT}{\widetilde{\Order}}
\newcommand{\ThetaT}{\widetilde{\mathrm{\Theta}}}
\newcommand{\OmegaT}{\widetilde{\mathrm{\Omega}}}
\def\epsilon{\ensuremath{\varepsilon}}
\newcommand{\COMMENTED}[1]{{}}
\newcommand{\junk}[1]{\COMMENTED{#1}}
\newlength{\savedparindent}
\newcommand{\SaveIndent}{\setlength{\savedparindent}{\parindent}}
\newcommand{\RestoreIndent}{\setlength{\parindent}{\savedparindent}}
\newcommand{\InGray}[1]{%
\SaveIndent{} %
\noindent{} \fcolorbox[rgb]{0,0,0}{0.95,0.95,0.95}{
\begin{minipage}{0.965\linewidth} %
\RestoreIndent{}%
#1
\end{minipage}
} }
\newcommand{\InGrayMiddle}[1]{%
\SaveIndent{} %
\centerline{ \fcolorbox[rgb]{0,0,0}{0.95,0.95,0.95}{
\begin{minipage}{0.8\linewidth} %
\RestoreIndent{}%\
#1
\end{minipage}
} } }
\newcommand{\congest}{\textsf{CONGEST}\xspace}
\newcommand{\congestb}{\textsf{CONGEST}$_{\B}$\xspace}
\newcommand{\congc}{\textsf{CONGESTED CLIQUE}\xspace}
\newcommand{\local}{\textsf{LOCAL}\xspace}
\newcommand{\B}{\ensuremath{\mathfrak{b}}}
\newcommand{\EPS}{\ensuremath{\mathcal{E}}}
\newcommand{\DISJ}{\ensuremath{\textsc{DISJ}_n}}
\newcommand{\cut}{\ensuremath{\mathcal{C}}}
\newcommand{\scut}{\ensuremath{|\cut|}}
\title{Detecting cliques in \textsf{CONGEST} networks\thanks{A.C. is supported by the Centre for Discrete Mathematics and its Applications (DIMAP), by EPSRC award EP/D063191/1, and by EPSRC award EP/N011163/1. C.K. carried out most of the work on this paper while he was at the University of Warwick, where he was supported by the Centre for Discrete Mathematics and its Applications (DIMAP), and by EPSRC award EP/N011163/1.}}
\author[1]{Artur Czumaj}%{A.Czumaj@warwick.ac.uk}{}{Research partially supported by the Centre for Discrete Mathematics and its Applications (DIMAP), by EPSRC award EP/D063191/1, and by EPSRC award EP/N011163/1.}%mandatory, please use full name; only 1 author per \author macro; first two parameters are mandatory, other parameters can be empty.
\author[2]{Christian Konrad}%{christian.konrad@bristol.ac.uk}{}{Most of work on this paper has been carried out while the author was at the University of Warwick, where he was supported by the Centre for Discrete Mathematics and its Applications (DIMAP) and by EPSRC award EP/N011163/1.}
\affil[1]{Department of Computer Science and Centre for Discrete Mathematics and its Applications (DIMAP), University of Warwick, UK, \texttt{A.Czumaj@warwick.ac.uk}}
\affil[2]{Department of Computer Science, University of Bristol, UK, \texttt{christian.konrad@bristol.ac.uk}}
\begin{document}

%\maketitle

%\begin{titlepage}
%\date{{\small \zeit{}, \today}}
\maketitle%\thispagestyle{empty}

\begin{abstract}
The problem of detecting network structures plays a central role in distributed computing. One of the fundamental problems studied in this area is to determine whether for a given %constant-size
graph $H$, the input network contains a subgraph isomorphic to $H$ or not. We investigate this problem for $H$ being a clique $K_{\ell}$ in the classical distributed \congest model, where the communication topology is the same as the topology of the underlying network, and with limited communication bandwidth on the links.

%In distributed computing, algorithms with sublinear round complexity for subgraph detection have been recently developed, focusing mostly on the power \congc model, where the communication is allowed between any pair of nodes in the communication network. For the standard \congest model, where the communication topology is the same as the topology of the network, only very recently we have seen some advances. One prominent recent result due to Izumi and Le Gall (2017) shows that a triangle can be detected in $\OrderT(n^{2/3})$ rounds, though we do not have any subquadratic upper bounds for detecting other subgraphs. + trees, cycles On the other hand, in a very recent work, Fischer et al.\ (2017) demonstrated that for any constant $k \ge 2$, there is constant size subgraph $H$ whose detecting in a network requires time $\Omega(n^{2-k}/\B)$ in the \congest model, where $\B$ is the bandwidth of each communication links.

Our first and main result is a lower bound, showing that detecting $K_{\ell}$ requires $\Omega(\sqrt{n} / \B)$ communication rounds, for every $4 \le \ell \le \sqrt{n}$, and $\Omega(n / (\ell \B))$ rounds for every $\ell \ge \sqrt{n}$, where $\B$ is the bandwidth of the communication links. This result is obtained by using a reduction to the set disjointness problem in the framework of two-party communication complexity.
We complement our lower bound with a two-party communication protocol for listing all cliques in the input graph, which up to constant factors communicates the same number of bits as our lower bound for $K_4$ detection. This demonstrates that our lower bound cannot be improved using the two-party communication framework.
\end{abstract}
%\end{titlepage}

%---------------------------------------------------------------------------------------------------------------------------------------------------------

\section{Introduction}

We study the problem of detecting network structures in a distributed environment, which is a fundamental problem in modern computing. Our focus is on the \emph{subgraph detection problem}, in which for a given %constant-size
graph $H$, one wants to determine whether the network graph $G$ contains a subgraph isomorphic to $H$ or not. We investigate this problem for $H$ being a clique $K_{\ell}$ for $\ell \ge 4$.

%In the \emph{subgraph detection problem}, we are given a constant-size graph $H$, and want to determine whether the network graph $G$ contains a subgraph isomorphic to $H$ or not.

%In this paper we consider undirected graphs.
The nowadays classical distributed \emph{\congest model} (see, e.g., \cite{P00}) is a variant of the classical \local model of distributed computation (where in each round network nodes can send through all incident links messages of unrestricted size) with limited communication bandwidth. The distributed system is represented as a network (undirected graph) $G = (V,E)$ with $n = |V|$ nodes, where each node $v \in V$ executes the same algorithm in synchronous rounds, and the nodes collaborate to solve a graph problem with input $G$. Each node is assumed to have a unique identifier from $\{0, \dots, \text{poly}(n)\}$. In any single round, all nodes can:
\begin{enumerate}[(i)]
%\begin{inparaenum}[\it (i)]
\item perform an unlimited amount of local computation,
\item send a possibly different $\B$-bit message to each of their neighbors, and
\item receive all messages sent to them.
%\end{inparaenum}
\end{enumerate}
We measure the \emph{complexity} of an algorithms by the number of synchronous rounds required.

In accordance with the standard terminology in the literature, we assume $\B = \Order(\log n)$; we note though that our analysis generalizes to other settings of $\B$ in a straightforward manner. (We note that in our lower bound for detecting $K_4$ and $K_{\ell}$ in Section \ref{sec:lower-bound-for-K4}, to ensure full generality of presentation, we will make the analysis parametrized by the message size $\B$, in which case we will refer to such model of distributed computation as \congestb, the \congest model with messages of size $\B$.)

Our goal is, for a given network $G = (V,E)$ and $\ell \ge 4$, to solve the \emph{subgraph detection problem} for a clique $K_{\ell}$, that is, to design an algorithm in the \congest model such that
\begin{enumerate}[(i)]
%\begin{inparaenum}[\it (i)]
\item if $G$ contains a copy of $K_{\ell}$, then with probability at least $\frac23$ at least one node outputs 1, and
\item if $G$ does not contain any copy of $K_{\ell}$, then with probability at least $\frac23$ no node outputs 1.%\Artur{Christian, could you check whether you're happy with his definition, and if needed, could you please polish it.} \Christian{This is exactly the right definition to suit our LB result!}
%\end{inparaenum}
\end{enumerate}
%It is known that by applying standard amplification technique, the success probability can be increased to any positive constant $p$, or even to $1-o(1)$, with at most a logarithmic loss of running time complexity.

The subgraph detection problem is a local problem: it can be solved efficiently solely on the basis of local information. In particular, in the \congest model, the problem of finding $K_{\ell}$ in a graph can be trivially solved in $\Order(n)$ rounds, or in fact, in $\Order(\max_{u \in V} \deg_G(u))$ rounds, where $\deg_G(u)$ denotes the degree of node $u$ in $G$. Indeed, if each node sends its entire neighborhood to all its neighbors, then afterwards, each node will be aware of all its neighbors and of their neighbors. Therefore, in particular, each node will be able to detect all cliques it belongs to. Since for each node $u$, the task of sending its entire neighborhood to all its neighbors can be performed in $\Order(\deg_G(u))$ rounds in the \congest model, the total number of rounds for the entire network is $\Order(\max_{u \in V} \deg_G(u)) = \Order(n)$ rounds. In view of this simple observation, the main challenge in the clique $K_{\ell}$ detection problem %of determining if a given graph has a clique $K_{\ell}$
is whether this task can be performed in a \emph{sublinear number of rounds.}

%---------------------------------------------------------------------------------------------------------------------------------------------------------

\subsection{Our results}

In this paper, we give the first non-trivial lower bound for the complexity of detecting a clique $K_{\ell}$ in the \congestb model, for $\ell \ge 4$. In Theorem \ref{thm:lb-Kl}, we prove that every algorithm in the \congestb model that with probability at least $\frac23$ detects $K_{\ell}$, for $\ell \ge 4$ and $\ell = \Order(\sqrt{n})$, requires $\Omega(\sqrt{n}/\B)$ rounds. Further, if $\ell = \omega(\sqrt{n})$, then $\Omega(n/(\ell\,\B))$ rounds are required. We are not aware of any other non-trivial (super-constant) lower bound for this problem in the \congestb model.

We complement our lower bound with a two-party communication protocol for listing all cliques in the input graph (see Theorem \ref{thm:upper-bound}), which up to constant factors communicates the same number of bits as our lower bound for $K_4$ detection. This demonstrates that our lower bound is essentially tight in this framework, and cannot be improved using the two-party communication approach.

%---------------------------------------------------------------------------------------------------------------------------------------------------------

\subsection{Techniques: Framework of two-party communication complexity}
\label{subsec:techniques}

Our main results, the lower bound of clique detection in Theorem \ref{thm:lb-Kl} and the upper bound in Theorem \ref{thm:upper-bound}, rely on the \emph{two-party communication complexity} framework and the use of a tight lower bound for the set disjointness problem in this framework.

We consider the classical two-party communication complexity setting (cf. \cite{KN97}) in which two players, Alice and Bob, each have some private input $X$ and $Y$. The players' goal is to compute a joint function $\mathfrak{f}(X,Y)$, and the complexity measure used is the number of bits Alice and Bob must exchange to compute $\mathfrak{f}(X,Y)$. In the two-party communication problem of \emph{set disjointness}, Alice's input is $X \in \{0, 1\}^n$ and Bob holds $Y \in \{0, 1 \}^n$, and their goal is to compute $\DISJ(X,Y) := \bigvee_{i=1}^n X_i \wedge Y_i$. In a seminal work, Kalyanasundaram and Schnitger \cite{KS92} showed that in any randomized communication protocol, the players must exchange $\Omega(n)$ bits to solve the set disjointness problem with constant success probability.

\begin{theorem}[\cite{KS92}]%[Kalyanasundaram and Schnitger \cite{KS92}]
\label{thm:comm}
The randomized two-party communication complexity of set disjointness %$\DISJ$
is $\Omega(n)$. That is, for any constant $p>0$, any randomized two-party communication protocol that computes $\DISJ(X,Y)$ with probability at least $p$, has two-party communication complexity $\Omega(n)$.
\end{theorem}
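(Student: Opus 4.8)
The plan is to prove the $\Omega(n)$ bound via the \emph{information complexity} method, which gives a clean and modular route to it. For a randomized protocol $\Pi$ and an input distribution $\mu$ on pairs $(X,Y)$, define the internal information cost $\mathrm{IC}_\mu(\Pi) = I(X ; \Pi \mid Y) + I(Y ; \Pi \mid X)$, where $\Pi$ denotes the random transcript. The first, easy step is the standard observation that the expected number of bits communicated by $\Pi$ is at least $\mathrm{IC}_\mu(\Pi)$ for every $\mu$ (a $c$-bit transcript carries at most $c$ bits of information); so it suffices to exhibit one distribution $\mu$ under which every protocol computing $\DISJ$ with constant error has information cost $\Omega(n)$.

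The second step is a \emph{direct sum} reduction from $\DISJ$ on $n$ coordinates to the function $\mathrm{AND}$ of a single bit-pair. Fix the single-coordinate distribution $\nu$ on $\{0,1\}^2$ that first tosses a fair coin $D \in \{\mathsf{a},\mathsf{b}\}$ and then sets $(x,y)$ uniform in $\{(0,0),(1,0)\}$ if $D=\mathsf{a}$ and uniform in $\{(0,0),(0,1)\}$ if $D=\mathsf{b}$; under $\nu$ one always has $x \wedge y = 0$, so $\DISJ(X,Y) = 0$ almost surely under $\nu^{\otimes n}$, while conditioning on $D = (D_1,\dots,D_n)$ makes the coordinates independent and ``collision-free''. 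Given a protocol $\Pi$ for $\DISJ$, one builds, for each coordinate $i$, a protocol $\Pi_i$ for $\mathrm{AND}$: on input $(u,v)$, Alice and Bob use public randomness to sample $D_{-i}$ and the remaining coordinates $X_j,Y_j$ ($j\neq i$) — each player can do this with no communication because, given $D_j$, at most one of $X_j,Y_j$ is not identically zero and it is ``owned'' by the appropriate player — then plant $(u,v)$ in position $i$ and run $\Pi$. Then $\Pi_i$ computes $\mathrm{AND}$ with the same error, and a super-additivity argument (independence of the coordinates given $D$, plus the chain rule for mutual information) gives $\sum_{i=1}^n \mathrm{IC}_\nu(\Pi_i) \le \mathrm{IC}_\mu(\Pi)$ for the appropriate ``collapsed'' $\mu$ (built from $\nu^{\otimes n}$ and the coins $D$). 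So it is enough to prove $\mathrm{IC}_\nu(\Pi') = \Omega(1)$ for every single-bit-pair protocol $\Pi'$ computing $\mathrm{AND}$ with constant error.

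This single-coordinate bound is the heart of the matter and the step I expect to be the main obstacle. Write $\Pi'_{u,v}$ for the distribution of the transcript on input $(u,v)$ and $h(\cdot,\cdot)$ for Hellinger distance. The key structural input is the \emph{cut-and-paste} property of communication protocols: the transcript probabilities factorize as $\Pi'_{x,y}(\tau) = A_x(\tau)\,B_y(\tau)$ (Alice's messages depend only on her input, Bob's only on his), so the products $\Pi'_{0,1}(\tau)\,\Pi'_{1,0}(\tau)$ and $\Pi'_{0,0}(\tau)\,\Pi'_{1,1}(\tau)$ coincide for every transcript $\tau$, whence $h^2(\Pi'_{0,1},\Pi'_{1,0}) = h^2(\Pi'_{0,0},\Pi'_{1,1})$. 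Since a correct protocol outputs $1$ on $(1,1)$ and $0$ on $(0,0)$, the right-hand side is $\Omega(1)$, so $\Pi'_{0,1}$ and $\Pi'_{1,0}$ are $\Omega(1)$-far in Hellinger distance; by the triangle inequality at least one of $h(\Pi'_{0,0},\Pi'_{1,0})$ and $h(\Pi'_{0,0},\Pi'_{0,1})$ is then $\Omega(1)$ as well. But under $\nu$ these two quantities are precisely what the transcript reveals about the hidden coin $D$ (the inputs $(1,0)$ and $(0,1)$ being the only ``live'' alternatives to $(0,0)$), so, via the standard relations among mutual information, Hellinger distance and statistical distance (with $D$ folded into the conditioning to keep the bookkeeping honest), $\mathrm{IC}_\nu(\Pi') = \Omega(1)$. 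Chaining everything gives $c \ge \mathrm{IC}_\mu(\Pi) \ge \sum_{i=1}^n \mathrm{IC}_\nu(\Pi_i) = n \cdot \Omega(1) = \Omega(n)$.

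A few routine points remain: that the public-versus-private randomness and Yao-type manipulations in the direct sum lose only constant factors; that ``constant error'' may be taken to be any fixed success probability $p > 1/2$ in the theorem's statement, since amplification by independent repetition and majority vote multiplies the communication by only a constant; and the exact constants in the Hellinger/information inequalities. A fully self-contained alternative that avoids information theory is Razborov's corruption argument: take the mixed distribution $\tfrac34\mu_0 + \tfrac14\mu_1$ with $\mu_0$ supported on disjoint pairs and $\mu_1$ on pairs meeting in a single random element, prove that no combinatorial rectangle can be simultaneously $\mu_0$-heavy and $\mu_1$-light, and conclude by a counting argument that any low-error protocol must induce $2^{\Omega(n)}$ rectangles; there the main obstacle is the corruption lemma itself, an averaging argument over the random partition underlying $\mu_0$.
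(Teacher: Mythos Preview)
The paper does not prove this theorem at all: it is stated with a citation to \cite{KS92} and used as a black box, so there is no ``paper's own proof'' to compare against. Your sketch via information complexity is the Bar-Yossef--Jayram--Kumar--Sivakumar approach and is essentially correct as outlined; it is, however, not the original argument of Kalyanasundaram and Schnitger, which is a more combinatorial/distributional lower bound, and also differs from Razborov's corruption proof that you mention as an alternative. Any of these routes yields the $\Omega(n)$ bound the paper needs.

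One remark on the statement itself: as written in the paper, ``for any constant $p>0$'' is too strong (for $p\le 1/2$ a trivial protocol with zero communication suffices), and you correctly note that one should take $p>1/2$ and then appeal to amplification; this is a flaw in the paper's phrasing rather than in your argument.
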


Our main result, the lower bound for detecting $K_{\ell}$ in the \congest model, relies on a reduction from the two-party communication problem set disjointness. The two-party communication framework, and, in particular, the two-party set disjointness problem, have been frequently used in the past to construct lower bounds for the \congest model, see, e.g., \cite{CKP17,DKO14,FGO17b,GO17,KR17}. A typical approach relies on a construction of a special graph $G = (V,E)$ with some fixed edges and some edges depending on the input of Alice and Bob. One partitions the nodes of $G$ into two disjoint sets $V_A$ and $V_B$. Let $\cut$ be the $(V_A, V_B)$-cut, that is, the set of edges in $G$ with one endpoint in $V_A$ and one endpoint in $V_B$. Let $E_A$ be the edge set of $G[V_A]$ (subset of $E$ on vertex set $V_A$) and $E_B$ be the edge set of $G[V_B]$. We consider a scenario where Alice's input is represented by the subgraph $G_A=(V, E_A \cup \cut) \subseteq G$ and Bob's input is represented by $G_B = (V, E_B \cup \cut) \subseteq G$. (We denote this way of distributing the vertex and edge sets as the \emph{vertex partition model}.) In order to learn any information about the structure of $G[A] \setminus \cut$ and $G[B] \setminus \cut$, and hence about the input of the other player, Alice and Bob must communicate through the edges of the cut $\cut$. Therefore, in order to obtain a lower bound for a problem in the \congestb model, one wants to construct $G$ to ensure that it has some property (in our case, contains a copy of $K_{\ell}$) if and only if the corresponding instance of set disjointness is such that $\DISJ(X,Y) = 1$, and in order to determine the required property, one has to communicate a large part of (essentially the entire graph) $G[A]$ %or $G[B]$
through $\cut$. With this approach, if the cut $\cut$ has size $\scut$, and the private inputs of Alice and Bob (edges in $G[A] \setminus \cut$ or $G[B] \setminus \cut$) are of size $\mathfrak{s}$, one can apply Theorem \ref{thm:comm} to argue that the round complexity of any distributed algorithm in the \congestb model for a given problem is $\Omega(\frac{\mathfrak{s}} {\scut \cdot \B})$. The central challenge is to ensure that for the encoded set disjointness instance of size $\mathfrak{s}$ and the cut of size $\scut$, the ratio $\frac{\mathfrak{s}}{\scut}$ is as large as possible.

%For example, the graph construction for the lower bound for computing the diameter in {\small\it (Silvio Frischknecht, Stephan Holzer, and Roger Wattenhofer. Networks cannot compute their diameter in sublinear time. In ACM-SIAM Symposium on Discrete Algorithms, SODA, pages 1150--1162, 2012)} has $\mathfrak{s} = n^2$ and $\scut = \Order(n)$, which gives an almost linear lower bound. The graph construction in {\small\it (Silvio Frischknecht, Stephan Holzer, and Roger Wattenhofer. Networks cannot compute their diameter in sublinear time. In ACM-SIAM Symposium on Discrete Algorithms, SODA, pages 1150--1162, 2012)} for the lower bound for computing a $(3/2 - \eps)$-approximation to the diameter has a smaller cut of $|\mathfrak{C}| = \Order(\sqrt{n})$, but this comes at the price of supporting a smaller input size $\mathfrak{s} = \Order(n)$, which gives a lower bound that is roughly a $\sqrt{n}$.

For example, Drucker et al. \cite{DKO14} incorporated a similar approach to obtain a lower bound for the subgraph detection problem in a \emph{broadcast} variant of the \congestb model (in fact, even for a (stronger) broadcast variant of the \congc model), where nodes are required to send the same message through all their incident edges. The lower bound construction requires sending $\Omega(n^2)$ bits through the cut of size $\Order(n^2)$, but the fact that in the broadcast variant of the \congestb model every node is required to send the same message via all incident edges, at most $\Order(n \, \B)$ bits can be transmitted through the cut, yielding a lower bound of $\Omega(\frac{n}{\B})$. (In particular, for the broadcast variant of the \congestb model, Drucker et al.\ \cite[Theorem~15]{DKO14} proved that detecting a clique $K_{\ell}$, $\ell \ge 4$, requires $\Omega(\frac{n}{\B})$ rounds.) Note however that in the (non-broadcast) \congestb model, this construction does not give any not-trivial bound, since $\frac{\mathfrak{s}}{\scut} = \Order(1)$.

Our main building block for our lower bound is the construction of $(\Omega(n^2), \Order(n^{3/2}))$-lower-bound graphs (in Section \ref{subsec:construction-lower-bound-graphs}) that can be used to encode a set disjointness instance of size $\mathfrak{s} = \Omega(n^2)$ such that the cut is of size $\scut = \Order(n^{3/2})$. By incorporating these bounds in the framework described above, this construction leads to the first non-trivial lower bound of $\Omega(\frac{\sqrt{n}}{\B})$ for the subgraph detection problem in the \congestb model for the clique $K_4$. This construction can also be extended to detect larger cliques, yielding the lower bound of $\Omega(\frac{n}{(\ell + \sqrt{n}) \, \B})$ for detecting any $K_{\ell}$ with $\ell \ge 4$.

Since these are the first superconstant lower bounds for detecting a clique in the \congest model and since the best upper bound for these problems is still $\Order(n)$, the next goal is to understand to what extent these bounds could be improved and whether the existing approach could be used for that task. Do we need $\Omega(\frac{\sqrt{n}}{\B})$ communication rounds to detect any clique $K_{\ell}$ (with $\ell \ge 4$, $\ell = \Order(\sqrt{n})$) in the \congestb model, or maybe we need as many as a linear number of rounds? While we do not know the answer to this question, and in fact, this question is the main open problem left by this paper, we can prove that any better lower bound would require a significantly different approach, going beyond the two-party communication framework in the vertex partition model.

Indeed, let us consider the vertex partition model in the two-party communication framework, as defined above. The input consists of an undirected $G=(V, E)$ with an arbitrary vertex partition $V = V_A \ \dot{\cup} \ V_B$. We consider a scenario where Alice is given the subgraph $G_A=(V, E_A \cup \cut) \subseteq G$ and Bob is given $G_B = (V, E_B \cup \cut) \subseteq G$, where $\cut$ is the $(V_A, V_B)$-cut in $G$.
The arguments in our construction of lower-bound graphs in Theorem~\ref{thm:lower-bound-graph} imply that for some inputs, any two-party communication protocol in the vertex partition model for the problem of listing all cliques in a given graph with $n$ nodes requires communication of $\Omega(\sqrt{n} \, \scut)$ bits between Alice and Bob. We will prove in Section~\ref{sec:upper-bound} (Theorem~\ref{thm:upper-bound}) that this lower bound is asymptotically tight in the two-party communication framework in the vertex partition model. We show that there is a two-party communication protocol in the vertex partition model for listing \emph{all cliques}
%clique enumeration
that uses $\Order(\sqrt{n} \, \scut)$ communication rounds, where $\cut$ is the set of shared edges between Alice and Bob. This shows that we cannot obtain stronger lower bounds for the $K_{\ell}$-detection problem, for $\ell = \Order(\sqrt{n})$, in the \congest model using the two-party communication framework in the vertex partition model.

%---------------------------------------------------------------------------------------------------------------------------------------------------------

\subsection{Related works}

%---------------------------------------------------------------------------------------------------------------------------------------------------------

%\paragraph{Randomized subgraph detection in \congest.}
%
As a fundamental primitive, subgraph detection and listing in the \congest model has been recently receiving attention from multiple authors, focusing mainly on randomized complexity. However, despite major efforts, for the \congest model, relatively little is known about the complexity of the subgraph detection problem.

Rather surprisingly, prior to our work, no non-trivial results about the complexity of clique $K_{\ell}$ ($\ell \ge 4$) detection in the \congest model have been known. While there is a trivial lower bound of a constant number of rounds, and as we mentioned earlier, one can easily solve the problem in $\Order(n)$ rounds in the \congest model, no sublinear upper bounds nor superconstant lower bounds have been known.

In a recent breakthrough in this area, Izumi and Le Gall \cite{IL17} raised some hopes that maybe these problems could be solved in a sublinear number of rounds in the \congest model. They considered the subgraph detection problem for the smallest interesting subgraph $H$, the triangle $K_3$, and presented a very clever algorithm that detects a triangle in $\OrderT(n^{2/3})$ rounds. Further, Izumi and Le Gall \cite{IL17} also showed that the related problem of finding all triangles (triangle listing) can be solved in $\OrderT(n^{3/4})$ rounds. There is no non-trivial lower bound for the triangle detection problem, though it is known (cf. \cite{IL17,POS16}) that the more complex triangle listing problem requires $\Omega(n^{1/3}/\log n)$ rounds, even in the \congc model. It can also be shown that the problem of listing all triangles such that each node $v$ learns all triangles that it is part of significantly harder than the general triangle listing problem and requires $\Omega(n / \log n)$ rounds \cite[Proposition~4.4]{IL17}.
While rather disappointingly, we do not know how to extend any of these upper bounds to other cliques $K_{\ell}$ with $\ell \ge 4$, the work of Izumi and Le Gall \cite{IL17} raises hope that detecting cliques $K_{\ell}$ could potentially be solved in a sublinear number of rounds. In fact, even for $K_3$, we do not even know whether detecting a triangle $K_3$ can be solved in a polylogarithmic or even a constant number of rounds in the \congest model (the lower bound of $\Omega(n^{1/3}/\log n)$ rounds in the \congc model (cf. \cite{IL17,POS16}) holds only for a more complex problem of detecting \emph{all triangles}).

Even et al.\ \cite{EFFGLMMOORT17} noted that the problem is significantly simpler for trees, and designed a randomized color-coding algorithm that detects any constant-size \emph{tree} on $\ell$ nodes %with constant probability
in $\Order(\ell^{\ell})$ rounds.

As for lower bounds for the subgraph detection problem in the \congest model, until very recently, the only hardness results known in the literature have been for cycles. For any fixed $ \ge 4$, there is a polynomial lower bound for detecting the $\ell$-cycle $C_{\ell}$ in the \congest model \cite{DKO14}, where it has been shown that detecting $C_{\ell}$ requires $(\text{ex}(n,C_{\ell})/ \log n)$ rounds, where $\text{ex}(n,C_{\ell})$ is the Tur\'{a}n number for cycles, that is, the largest possible number of edges in a $C_{\ell}$-free graph over $n$ vertices. In particular, for odd-length cycles (of length 5 or more), the lower bound of \cite{DKO14} is $\Omega(n/\log n)$, and it is $\Omega(\sqrt{n} / \log n)$ for $\ell = 4$. Very recently, Korhonen and Rybicki \cite{KR17} improved the lower bound for all even-length cycles to $\Omega(\sqrt{n} / \log n)$.
Further, Gonen and Oshman \cite{GO17} extended these lower bounds for $C_{\ell}$-freeness to some related classes of graphs, though still with some cyclic underlying structure. (As mentioned above, we note that Drucker et al.\ \cite{DKO14} presented lower bounds for other graphs, but this was in a \emph{broadcast} variant of the \congc model, where nodes are required to send the same message on all their edges. In particular, for the broadcast variant of the \congc model, Drucker et al.\ \cite{DKO14} proved that detecting a clique $K_{\ell}$, $\ell \ge 4$, requires $\Omega(n / \log n)$ rounds.)

The only lower bound for the subgraph detection problem for $H$ significantly other than cycles, is a very recent work of Fischer et al.\ \cite{FGO17b}, who demonstrated that the subgraph detection problem is hard even for some subgraphs $H$ of constant size. In particular, for any constant $\ell \ge 2$, there is a graph $H$ with a constant number of vertices and edges such that the problem of finding $H$ in a network of size $n$ requires time $\Omega(n^{2-\frac{1}{\ell}}/\B)$ in the \congest model, where $\B$ is the bandwidth of each communication links.

%\paragraph{Deterministic subgraph detection in \congest.}
%
There has also been some recent research for the \emph{deterministic} subgraph detection problem in the \congest model. For example, Drucker et al.\ \cite{DKO14} designed an $\Order(\sqrt{n})$ round algorithm for $C_4$ detection, and Even et al.\ \cite{EFFGLMMOORT17} and Korhonen and Rybicki \cite{KR17} obtained path and tree detection algorithms requiring only a constant number of rounds. Korhonen and Rybicki \cite{KR17} considered also deterministic subgraph detection (for paths, cycles, trees, pseudotrees, and on $d$-degenerate graphs) in the weaker broadcast \congest model, where nodes send the same message to all neighbors in each communication round. In the \congc model, deterministic subgraph detection algorithms were given by Dolev et al.\ \cite{DLP12} and Censor-Hillel et al.\ \cite{CKKLPS15}.

%---------------------------------------------------------------------------------------------------------------------------------------------------------

We summarize earlier results together with our new results in Table~\ref{table}.

{\renewcommand{\arraystretch}{1.2}
\begin{table}[t]
\centering
\begin{tabular}{||c|c|c|c||}\hline\hline
Paper & Time bound & Problem & Model
            \\ \hline
\cite{EFFGLMMOORT17} & $\Order(\ell^{\ell})$ & Detecting a tree on $\ell$ nodes & \congest
            \\
folklore & $\Order(n)$ & Detecting $K_{\ell}$, $\ell \ge 3$ & \congest
            \\
\cite{IL17} & $\OrderT(n^{2/3})$ & Detecting triangle $K_3$ & \congest
            \\
\cite{IL17} & $\OrderT(n^{3/4})$ & Triangle listing & \congest
            \\\hline
\cite{FGO17b} & $\Omega(n^{2-\frac1{\ell}}/\log n)$ & Detecting some $H$ of size $\Order(\ell)$ & \congest
            \\
\cite{DKO14} & $\Omega(n/\log n)$ & Detecting $C_{\ell}$, $\ell \ge 5$, $\ell$ odd & \congest
            \\
\cite{DKO14,KR17} & $\Omega(\sqrt{n}/\log n)$ & Detecting $C_{\ell}$, $\ell \ge 4$, $\ell$ even & \congest
            \\
\cite{IL17,POS16} & $\Omega(n^{1/3}/\text{poly-log}(n))$ & Triangle listing & \congc
            \\
\cite{DKO14} & $\Omega(n / \log n)$ & Detecting $K_{\ell}$ for $\ell \ge 4$ & broadcast \congc
            \\
Theorem \ref{thm:lb-K4} & $\Omega(\sqrt{n}/\log n)$ & Detecting $K_4$ & \congest
			\\
Theorem \ref{thm:lb-Kl} & $\Omega(\sqrt{n}/(\ell \log n))$ & Detecting $K_{\ell}$ for $\ell \ge 4$ & \congest
			\\
            \hline\hline
\end{tabular}
\caption{Prior (randomized) results for the problem of detecting a given subgraph $H$, or for listing all copies of $H$, in the \congest model (less relevant results (upper bounds) for the \congc model are omitted; note that lower bounds for \congc hold also for \congest and lower bounds for broadcast \congc do not imply any bounds for \congest).}
\label{table}
\end{table}
}

%---------------------------------------------------------------------------------------------------------------------------------------------------------

\subsubsection{Property testing of $H$-freeness}
Since there have been so few positive results for the original subgraph detection problem, recently there have been some advances in a relaxation of this problem, a closely related (and significantly simpler) problem of \emph{testing subgraphs freeness} in the \emph{framework of property testing for distributed computations} (see, e.g., \cite{BP11,EFFGLMMOORT17}).
%It should be noted that in the \congest model, the subgraph detection problem studied here is computationally significantly harder than another closely related problem of \emph{testing subgraphs freeness} in the \emph{framework of property testing for distributed computations} (see, e.g., \cite{BP11,EFFGLMMOORT17}).
In the property testing setting, an algorithm has to decide, with probability at least $\frac23$, if the input graph is (a) $H$-free (i.e., does not contain a subgraph isomorphic to $H$) or (b) $\varepsilon$-far from being $H$-free (that is, the goal is to distinguish whether the input graph $G$ is $H$-free or one needs to modify more than $\varepsilon |E(G)|$ edges of $G$ to obtain a graph that is $H$-free); in the intermediate case, the algorithm can perform arbitrarily (see e.g., \cite{CKKLPS15,EFFGLMMOORT17} for more details). Property testing of $H$-freeness in the \congest model has received a lot of attention lately (see, e.g., \cite{BP11,CFSV16,EFFGLMMOORT17,FGO17b,FO17}). In particular, it has been shown \cite{EFFGLMMOORT17} that testing $H$-freeness can be done in $\Order(1/\varepsilon)$ round in the \congest model for any constant-size graph $H$ containing an edge $(x,y)$ such that any cycle in $H$ contains at least one of $x, y$. This implies testing in $\Order(1/\varepsilon)$ rounds of any cycle $C_k$, and of any subgraph $H$ on five (or less) vertices except $K_5$. Further, for any $\ell \ge 5$, $K_{\ell}$-freeness can be tested in %$\Order(|E(G)|^{\frac12 - \frac{1}{\ell-2}}/\varepsilon^{\frac12 + \frac{1}{\ell-2}})$
$\Order((\varepsilon \cdot |E(G)|)^{\frac12 - \frac{1}{\ell-2}}/\varepsilon)$ rounds \cite{EFFGLMMOORT17}. For trees, Even et al.\ \cite{EFFGLMMOORT17} show that testing if the input graph is $T$-free for a tree $T$ on $\ell$ vertices can be done in $\Order(\ell^{1+\ell^2}/\varepsilon^{\ell})$ rounds the \congest model.

%---------------------------------------------------------------------------------------------------------------------------------------------------------

\junk{
\subsection{Outline}

Our lower bound is presented in Sections \ref{sec:lower-bound-for-K4}--\ref{sec:lb-graph-construction}. We begin in Section \ref{subsec:lower-bound-graphs} with a definition of lower-bound graphs and then, in Sections \ref{subsec:hardness-clique-detection}--\ref{subsec:lb-K_ell} we show how to combine lower-bound graphs and the lower bound for set disjointness to prove the hardness of clique detection.
%
%With the notion of lower-bound graphs at hand, we can formalize our reduction to the two-party communication complexity lower bound for set disjointness to obtain the following central theorem. Theorem \ref{thm:lb-two-party}
%
%In Theorem \ref{thm:lower-bound-graph} in Section \ref{sec:lb-graph-construction}, we prove the existence of a $(\Omega(n^2), \Order(n^{3/2}))$-lower-bound graph. By combining Theorem \ref{thm:lower-bound-graph} with Theorem \ref{thm:lb-two-party}, we obtain the following Theorem \ref{thm:lb-K4}
%
%Then, in Section \ref{subsec:lb-K_ell}, we extend this result to arbitrary $K_{\ell}$, $\ell \ge 4$ (Theorem \ref{thm:lb-Kl}).
%
The construction of $(\Omega(n^2), \Order(n^{3/2}))$-lower-bound graphs is presented in Section \ref{sec:lb-graph-construction}.

Our upper bound, a two-party communication protocol in the vertex partition model for listing all cliques, is presented in Section \ref{sec:upper-bound}.

Section \ref{sec:conclusions} gives some final conclusions.
}

%---------------------------------------------------------------------------------------------------------------------------------------------------------

\section{Lower bound results (detecting a clique requires $\OmegaT(\sqrt{n})$ rounds)}
\label{sec:lower-bound-for-K4}

In this section we prove our hardness results showing that any algorithm in the \congestb model that detects a $K_{\ell}$ with probability at least $\frac23$ requires $\Omega(\sqrt{n}/\B)$ rounds, for every $\ell = \Order(\sqrt{n})$ and $\ell \ge 4$, and requires $\Omega(\frac{n}{\ell \B})$ rounds if $\ell = \omega(\sqrt{n})$ (Theorems \ref{thm:lb-K4} and \ref{thm:lb-Kl}); or in short, $\Omega(\frac{n}{(\ell + \sqrt{n}) \, \B})$ rounds, for every $\ell \ge 4$. Our lower bound for the complexity of detecting $K_{\ell}$ in the \congest model relies on a reduction to the two-party communication complexity lower bound for the set disjointness problem (cf. Theorem \ref{thm:comm} in Section \ref{subsec:techniques}), which we implement with the help of lower-bound graphs (cf. Section \ref{subsec:lower-bound-graphs}).

%---------------------------------------------------------------------------------------------------------------------------------------------------------

\junk{
%\paragraph{Two-party communication complexity of set disjointness.}
\subsection{Two-party communication complexity of set disjointness}
\label{subsec:two-party}

We consider the classical two-party communication setting (cf. \cite{KN97}) in which two players, Alice and Bob, each have some private input $X$ and $Y$. The players' goal is to compute a joint function $f(X,Y)$, and the complexity measure used is the number of bits Alice and Bob must exchange to compute $f(X,Y)$. In the two-party communication problem of \emph{set disjointness}, Alice's input is $X \in \{0, 1\}^n$ and Bob holds $Y \in \{0, 1 \}^n$, and their goal is to compute $\DISJ(X,Y) := \bigvee_{i=1}^n X_i \wedge Y_i$. In the seminal work, Kalyanasundaram and Schnitger \cite{KS92} showed that in any randomized communication protocol, the players must exchange $\Omega(n)$ bits to solve the set disjointness problem with constant success probability.

\begin{theorem}[\cite{KS92}]%[Kalyanasundaram and Schnitger \cite{KS92}]
%\label{thm:comm}
The randomized two-party communication complexity of set disjointness %$\DISJ$
is $\Omega(n)$. That is, for any constant $p>0$, any randomized two-party communication protocol that computes $\DISJ(X,Y)$ with probability at least $p$, has two-party communication complexity $\Omega(n)$.
\end{theorem}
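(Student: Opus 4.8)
The statement is the classical Kalyanasundaram--Schnitger bound, and the route I would take goes through information complexity (an alternative is Razborov's corruption bound, sketched at the end). The plan is: (a) reduce to the distributional setting via Yao's minimax principle, so that it suffices to exhibit a distribution $\mu$ on $\{0,1\}^n\times\{0,1\}^n$ on which every deterministic protocol of communication cost $o(n)$ fails to compute $\DISJ(X,Y)$ with the required probability (a standard success-amplification and padding reduction lets one fix any $p$ bounded away from $\tfrac12$); (b) choose $\mu$ cleverly; (c) lower-bound an information-theoretic proxy for the communication cost; (d) decompose this proxy over the $n$ coordinates and reduce to a single $2$-bit subproblem.

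For (b) I would use the ``hidden mask'' distribution: draw $D=(D_1,\dots,D_n)$ with the $D_j$ independent and uniform over $\{\mathsf A,\mathsf B\}$, and, conditioned on $D$, let the pairs $(X_j,Y_j)$ be independent with $(X_j,Y_j)=(\text{uniform bit},0)$ when $D_j=\mathsf A$ and $(0,\text{uniform bit})$ when $D_j=\mathsf B$. Under $\mu$ the two sets are always disjoint, but a correct protocol must also behave correctly on the nearby inputs obtained by flipping a single coordinate to $(1,1)$, and that is what forces it to reveal information about $D$.

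For (c)--(d): the communication cost of a protocol $\Pi$ is at least its conditional information cost $I(X,Y;\Pi\mid D)$; since the coordinate pairs are conditionally independent given $D$, a direct-sum inequality gives $I(X,Y;\Pi\mid D)\ge\sum_{j=1}^n I(X_j,Y_j;\Pi\mid D)$; and hardwiring the coordinates $\neq j$ consistently with $D$ turns $\Pi$ into a protocol for $2$-bit AND, so it remains to show that any bounded-error AND protocol has conditional information cost $\Omega(1)$ on one coordinate. That last bound I would obtain from the geometry of transcript distributions: writing $\Pi_{ab}$ for the transcript distribution on input $(a,b)$, the rectangle (``cut-and-paste'') property of two-party protocols yields the Hellinger identity $h^2(\Pi_{01},\Pi_{10})=h^2(\Pi_{00},\Pi_{11})$; correctness forces $\Pi_{00}$ and $\Pi_{11}$ to be statistically far apart, hence $\Pi_{01}$ and $\Pi_{10}$ are Hellinger-far, and a Pinsker/Pythagoras-type estimate converts this into the desired constant lower bound. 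Summing over $j$ gives $I(X,Y;\Pi\mid D)=\Omega(n)$ and hence the communication lower bound.

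The main obstacle is the single-coordinate bound. A cheap entropy argument cannot work, because on the support of $\mu$ every pair lies in $\{(0,0),(0,1),(1,0)\}$, where AND is identically $0$, so a protocol may be trivial there; the information only surfaces because correctness is also demanded off-support at $(1,1)$, and extracting a quantitative bound from this genuinely requires the Hellinger-distance machinery together with the cut-and-paste lemma rather than elementary manipulations. The alternative, Razborov's argument, replaces all of this by a combinatorial corruption bound: take a distribution $\mu'$ placing $\tfrac34$ of its mass on disjoint pairs and $\tfrac14$ on pairs meeting in exactly one coordinate (coordinates grouped into constant-size blocks), prove that any combinatorial rectangle which is not exponentially small contains an $\Omega(1)$ fraction of intersecting inputs relative to its disjoint inputs, and conclude that a low-error protocol, whose transcript partitions the inputs into $2^c$ nearly monochromatic rectangles, must have $2^c$ exponentially large, i.e., $c=\Omega(n)$; there the rectangle-corruption lemma is the crux.
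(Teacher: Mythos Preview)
The paper does not prove this theorem; it is quoted as a black box with a citation to Kalyanasundaram and Schnitger. So there is no ``paper's proof'' to compare your proposal against. Your outline is a correct rendition of the Bar-Yossef--Jayram--Kumar--Sivakumar information-complexity proof (and your alternative sketch is Razborov's corruption proof), both of which are standard and yield the $\Omega(n)$ bound. Two small remarks: first, the information-complexity route you describe is historically later than, and methodologically different from, the original 1992 argument, so if one were being pedantic this is a different proof of the same theorem rather than a reconstruction of \cite{KS92}; second, you are right to flag that the success probability must be bounded away from $\tfrac12$ rather than merely positive---the theorem as stated in the paper with ``any constant $p>0$'' is a slight overstatement, since guessing achieves $p=\tfrac12$ with zero communication.
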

}

%---------------------------------------------------------------------------------------------------------------------------------------------------------

%\paragraph{Lower-bound graphs.}
\subsection{Lower-bound graphs}
\label{subsec:lower-bound-graphs}

Our reduction to the two-party communication complexity lower bound for the set disjointness problem relies on a notion of a \emph{lower-bound graph} (cf. Figure \ref{fig:lb-graphs}).

\begin{definition}
\label{def:lb-graph}
Let $G = (A, B, E)$ be a bipartite graph with $|A| = |B| = n$ and let $k, m$ be integers. Then $G$ is called a \emph{$(k,m)$-lower-bound graph} if:
\begin{enumerate}
\item $|E| \le m$.

\item The edge set $E$ is the union of (not necessarily disjoint) sets $\EPS_1, \EPS_2, \dots, \EPS_k$ such that, for every $i$, $1 \le i \le k$, the edge-induced subgraph $G[\EPS_i]$ \emph{is isomorphic to $K_{2,2}$}.

\item For every $i,j$, $1 \le i, j \le k$, $i \ne j$, the vertex-induced subgraph $G[A(\EPS_i) \cup B(\EPS_j)]$ is \emph{not} isomorphic to $K_{2,2}$.

\item Define two \emph{graphs associated with $G$}, $H_A = (A, E_A)$ and $H_B = (B, E_B)$. $H_A$ is the graph on vertex set $A$, where $a_1, a_2 \in A$ are adjacent if and only if there exists an index $i$ with $A(\EPS_i) = \{a_1, a_2 \}$. Similarly, $H_B$ is the graph on vertex set $B$, where $b_1, b_2 \in B$ are adjacent if and only if there exists an index $j$ with $B(\EPS_j) = \{b_1, b_2 \}$. Then, we require that $H_A$ and $H_B$ are \emph{bipartite}.

    %Let $H_A = (A, E_A)$ ($H_B = (B, E_B)$) be the graph on vertex set $A$ (resp. $B$) where $a_1, a_2 \in A$ (resp. $b_1, b_2 \in B$) are adjacent if and only if there exists an index $i$ with $A(\EPS_i) = \{a_1, a_2 \}$ (resp. $B(\EPS_i) = \{b_1, b_2 \}$). Then, $H_A$ and $H_B$ are bipartite.
\end{enumerate}
\end{definition}

%---------------------------------------------------------------------------------------------------------------------------------------------------------

\begin{figure}[t]
\centerline
{
\includegraphics[width=0.99\textwidth]{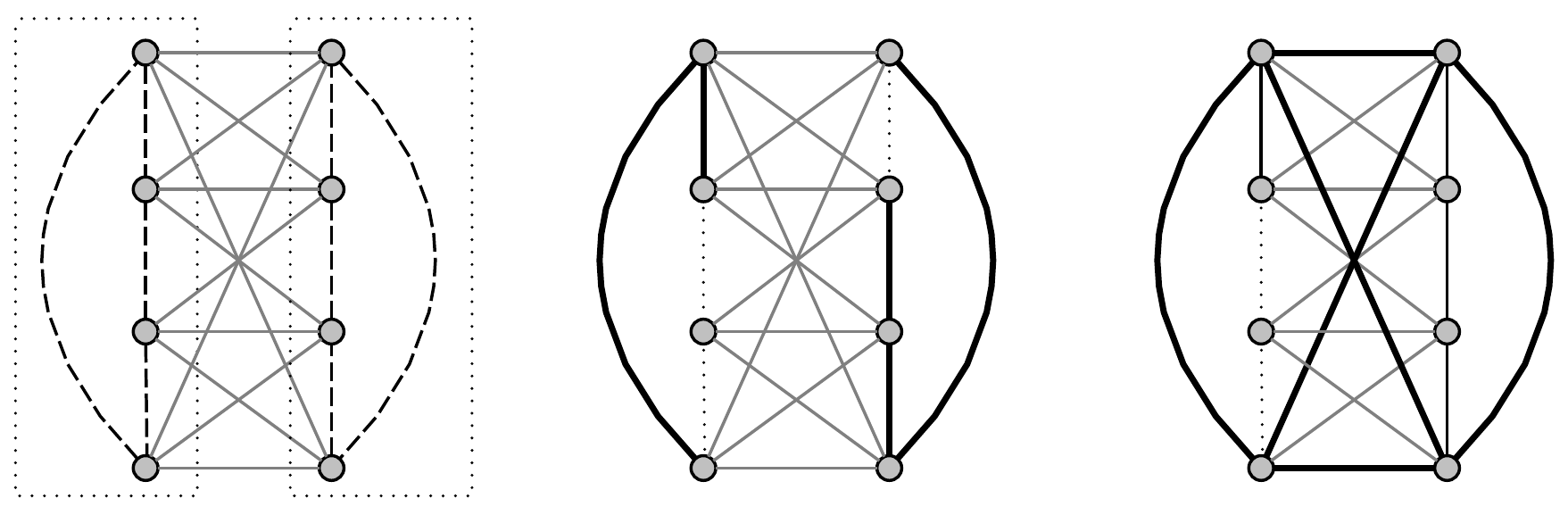}
}
\vspace{-5.7cm}

\hspace{0.9cm} $H_A$ \hspace{0.7cm} $G$ \hspace{0.7cm} $H_B$ \hspace{3.65cm} $G'$ \hspace{5.1cm} $G'$

\vspace{0.25cm}

\hspace{0.85cm} $a_1$ \hspace{2.3cm} $b_1$

\vspace{0.95cm}

\hspace{0.85cm} $a_2$ \hspace{2.3cm} $b_2$

\vspace{1.0cm}

\hspace{0.85cm} $a_3$ \hspace{2.3cm} $b_3$

\vspace{0.95cm}

\hspace{0.85cm} $a_4$ \hspace{2.3cm} $b_4$

\vspace{-3.75cm}

\hspace{6.6cm} $x_1$ \hspace{2.0cm} $y_1$

\vspace{0.85cm}

\hspace{5.55cm} $x_4$ \hspace{0.45cm} $x_2$ \hspace{2.05cm} $y_2$ \hspace{0.49cm} $y_4$

\vspace{0.85cm}

\hspace{6.6cm} $x_3$ \hspace{2.0cm} $y_3$

\vspace{1.1cm}

\caption{\textbf{Left:} Example of a $(4,12)$-lower-bound graph $G = (A, B, E)$. The dotted edges are the edges of the associated graphs $H_A$ and $H_B$ (observe that $H_A$ and $H_B$ form cycles of lengths $4$, which are bipartite). For $1 \le i \le 4$, let $\EPS_i$ be the edge set of subgraph $G[\{a_i, a_{(i \text{ mod } 4) + 1}, b_i, b_{(i \text{ mod } 4) + 1} \}]$. Observe that $E = \bigcup_{i \le 4} \EPS_i$, and, for every $i$, $G[\EPS_i]$ is isomorphic to $K_{2,2}$. Observe further that for $i \ne j$, $G[A(\EPS_i) \cup B(\EPS_j)]$ is not isomorphic to $K_{2,2}$.
\textbf{Center:} Graph $G'$ as in the proof of Theorem~\ref{thm:lb-two-party} obtained from the set disjointness instance with $X=(1,0,0,1)$ and $Y=(0,1,1,1)$. Graph $G'$ contains a $K_4$ if and only if the set disjointness instance evaluates to 1.
\textbf{Right:} The highlighted edges form a $K_4$.
}
\label{fig:lb-graphs}
\end{figure}

%---------------------------------------------------------------------------------------------------------------------------------------------------------

\subsection{Using lower-bound graphs and set disjointness to prove the hardness of clique detection}
\label{subsec:hardness-clique-detection}

%---------------------------------------------------------------------------------------------------------------------------------------------------------

With the notion of lower-bound graphs at hand, we can formalize our reduction to the two-party communication complexity lower bound for set disjointness to obtain the following central theorem.
%(Let us remind that the \congestb model is the \congest model with messages of size $\B$.)

\begin{theorem}
\label{thm:lb-two-party}
Let $G$ be a $(k,m)$-lower-bound graph. Then, detecting a $K_4$ in the \congestb model with probability at least $\frac23$ requires $\Omega\left(\frac{k}{m \B}\right)$ rounds.
\end{theorem}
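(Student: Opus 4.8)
The plan is to give a reduction from two-party set disjointness of size $k$ (Theorem~\ref{thm:comm}) to $K_4$ detection in the \congestb model, using the given $(k,m)$-lower-bound graph $G=(A,B,E)$ as the fixed ``skeleton''. First I would fix such a $G$ together with its decomposition $E=\EPS_1\cup\dots\cup\EPS_k$ into $K_{2,2}$'s and the associated bipartite graphs $H_A=(A,E_A)$, $H_B=(B,E_B)$. Given inputs $X,Y\in\{0,1\}^k$, Alice and Bob build a common graph $G'$ on vertex set $A\cup B$ as follows: $G'$ always contains all of $G$'s edges between $A$ and $B$; in addition, Alice adds the edge $A(\EPS_i)$ (an edge inside $A$) exactly when $X_i=1$, and Bob adds the edge $B(\EPS_i)$ (an edge inside $B$) exactly when $Y_i=1$. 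The point of the lower-bound-graph axioms is the equivalence: $G'$ contains a $K_4$ iff there is an index $i$ with $X_i=Y_i=1$, i.e. iff $\DISJ(X,Y)=1$. For the ``if'' direction, if $X_i=Y_i=1$ then $A(\EPS_i)\cup B(\EPS_i)$ spans $K_4$ since $G[\EPS_i]\cong K_{2,2}$ plus the two added edges. For the ``only if'' direction one argues that any $K_4$ in $G'$ must use at least one edge inside $A$ and at least one inside $B$ (a $K_4$ cannot live entirely in the bipartite graph $G$, nor — using bipartiteness of $H_A$, $H_B$ — entirely on $A$ or entirely on $B$), so it contains some $A(\EPS_i)$ and some $B(\EPS_j)$; the four vertices then induce, in $G$, a copy of $K_{2,2}$ on $A(\EPS_i)\cup B(\EPS_j)$, which by axiom~3 forces $i=j$, and then the presence of both added edges gives $X_i=Y_i=1$.

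Next I would turn a fast \congestb algorithm for $K_4$ detection into a cheap communication protocol. Simulate the algorithm on $G'$ with Alice controlling the nodes in $A$ and Bob controlling the nodes in $B$: each knows the full state of its own nodes (it knows all incident edges — the fixed $G$-edges are public, and the $A$-internal resp.\ $B$-internal edges are its own input), so the only messages that must actually be exchanged are those sent across edges of $G'$ with one endpoint in $A$ and one in $B$. All such edges lie in $E$, so there are at most $m$ of them, each carrying a $\B$-bit message per round and per direction. Hence $T$ rounds of the distributed algorithm can be simulated with $O(m\B T)$ bits of communication. At the end some node outputs $1$ iff a $K_4$ was detected; whoever owns that node announces the answer with $O(1)$ extra bits. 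Success probability $\ge \frac23$ is inherited. Combining with the equivalence above, this is a randomized protocol solving $\DISJ$ on $k$ bits, so by Theorem~\ref{thm:comm} we need $\Omega(k)$ bits, giving $m\B T=\Omega(k)$, i.e. $T=\Omega\!\left(\frac{k}{m\B}\right)$.

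The steps I expect to be genuinely delicate are the two directions of the $K_4$ equivalence, and this is where all four axioms of Definition~\ref{def:lb-graph} get used. The ``only if'' direction is the main obstacle: I must rule out every ``spurious'' $K_4$ in $G'$ that does not come from a single index $i$. A $K_4$ with exactly two vertices in $A$ and two in $B$ needs both $A$-internal and $B$-internal edges present and all four cross pairs present in $G$; the cross pairs being a $K_{2,2}$ on $A(\EPS_i)\cup B(\EPS_j)$ and axiom~3 then pin $i=j$. I also have to exclude $K_4$'s with three vertices on one side: those would require a triangle inside $H_A$ (or $H_B$), contradicting axiom~4 (bipartiteness); and a $K_4$ entirely inside $A$ or inside $B$ is impossible for the same reason, while a $K_4$ entirely inside $G$ is impossible since $G$ is bipartite. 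Care is needed because $\EPS_i$ may overlap and $A(\EPS_i)$ is just an edge of $H_A$, so I should phrase the argument purely in terms of which pairs of vertices are adjacent in $G'$ rather than trying to track the $\EPS_i$'s individually. Once the combinatorial equivalence is nailed down, the simulation and the appeal to Theorem~\ref{thm:comm} are routine.
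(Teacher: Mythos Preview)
Your proposal is correct and follows essentially the same approach as the paper: build $G'$ from the lower-bound graph by adding the $H_A$-edge $A(\EPS_i)$ when $X_i=1$ and the $H_B$-edge $B(\EPS_i)$ when $Y_i=1$, prove that $G'$ contains a $K_4$ iff $\DISJ(X,Y)=1$ using exactly the four axioms as you describe, and simulate a $T$-round \congestb algorithm across the $(A,B)$-cut of size $\le m$ to get a protocol of cost $\Order(m\B T)$, which by Theorem~\ref{thm:comm} forces $T=\Omega(k/(m\B))$. Your treatment of the ``only if'' direction is in fact slightly more explicit than the paper's (which simply notes that bipartiteness of $H_A$ and $H_B$ forces a $2{+}2$ split), but the content is the same.
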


\begin{proof}
Let $\mathcal{A}$ be an algorithm in the \congestb model for $K_4$ detection, that is, such that with probability at least $\frac23$, if $G$ contains a $K_4$ then at least one node outputs 1 and if $G$ contains no copy of $K_4$ then no node outputs 1. We will show that $\mathcal{A}$ can be used to solve the two-party set disjointness problem for instances of size $k$.

Consider a set disjointness instance $(X, Y)$ of size $k$. Let $G=(A, B, E)$ be a $(k,m)$-lower-bound graph, let $\EPS_1, \EPS_2, \dots, \EPS_k$ be the edge partition as in Item~2 of Definition~\ref{def:lb-graph}, and let $H_A = (A, E_A)$ and $H_B=(B, E_B)$ be the graphs associated with $G$ (Item~4 in Definition~\ref{def:lb-graph}). Alice constructs the set $E'_A \subseteq E_A$ such that for every $i$ with $X_i = 1$, the edge between $A(\EPS_i)$ is included in $E_A'$. Similarly, Bob constructs the set $E'_B \subseteq E_B$ such that for every $i$ with $Y_i = 1$, the edge between $B(\EPS_i)$ is included in $E_B'$.

We first show that the graph $G' := G \cup (E'_A \cup E'_B)$ contains a $K_4$ if and only if $\DISJ(X, Y) = 1$. %, first suppose that $\DISJ(X, Y) = 1$.
%Let $i$ be an index such that $X_i \wedge Y_i = 1$. Then observe that $G'[A(\EPS_i) \cup B(\EPS_i)]$ is isomorphic to $K_4$.
Indeed, since by Item~4 of Definition~\ref{def:lb-graph}, the graphs $H_A$ and $H_B$ are bipartite (and thus the subgraphs $G'[A]$ and $G'[B]$ are bipartite too), any copy of $K_4$ in $G'$ must consist of two vertices from $A$ and two vertices from $B$. Let $a_1, a_2$ be any pair of distinct vertices in $A$ and $b_1, b_2$ be any pair of distinct vertices in $B$. Observe that if there is no $\EPS_i$ such that $\{a_1, a_2\} = A(\EPS_i)$ or there is no $\EPS_i$ such that $\{b_1, b_2\} = B(\EPS_i)$ then it is impossible for the nodes $a_1, a_2, b_1, b_2$ to form a $K_4$, since this would imply that either $a_1a_2 \notin E'_A$ or $b_1b_2 \notin E'_B$. Assume therefore that $\{a_1, a_2\} = A(\EPS_i)$ and $\{b_1, b_2\} = B(\EPS_j)$, for some $i,j$. Next, suppose that $i \ne j$. Then $G[\{a_1, a_2, b_1, b_2 \}]$ is not isomorphic to $K_{2,2}$, by Item~3 of Definition~\ref{def:lb-graph}. Hence, assume that $i=j$. Then $G[\{a_1, a_2, b_1, b_2 \}]$ forms a $K_{2,2}$ if and only if $X_i = Y_i = 1$, which in turn implies $\DISJ(X, Y) = 1$.

The simulation of $\mathcal{A}$ on $G'$ is executed as follows. Suppose that $\mathcal{A}$ runs in $r$ rounds. Alice simulates vertices $A$ and Bob simulates vertices $B$. In round $i$, Alice sends all messages from $A$ with destinations in $B$ to Bob, and Bob sends all messages from $B$ with destinations in $A$ to Alice. Since the cut between $A$ and $B$ is of size $m$, Alice and Bob exchange messages with overall $m \B$ bits per round. Thus, overall they communicate $r m \B$ bits. Since the algorithm allows them to solve set disjointness, by Theorem~\ref{thm:comm}, we have $rm\B = \Omega(k)$. Thus, $\mathcal{A}$ requires $\Omega(\frac{k}{m\B})$ rounds.
\end{proof}

In Theorem \ref{thm:lower-bound-graph} in Section \ref{sec:lb-graph-construction}, we prove the existence of a $(\Omega(n^2), \Order(n^{3/2}))$-lower-bound graph. By combining Theorem \ref{thm:lower-bound-graph} with Theorem \ref{thm:lb-two-party}, we obtain the following main result.

\begin{theorem}
\label{thm:lb-K4}
Every algorithm in the \congestb model that detects a $K_4$ with probability at least $\frac23$ requires $\Omega(\sqrt{n}/\B)$ rounds.
\end{theorem}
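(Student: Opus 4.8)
The plan is to derive Theorem~\ref{thm:lb-K4} as an immediate consequence of the two ingredients already in place: the generic reduction of Theorem~\ref{thm:lb-two-party}, which converts any $(k,m)$-lower-bound graph into an $\Omega(k/(m\B))$-round lower bound for $K_4$ detection in the \congestb model, and Theorem~\ref{thm:lower-bound-graph}, which asserts the existence of a $(k,m)$-lower-bound graph with $k=\Omega(n^2)$ and $m=\Order(n^{3/2})$ on vertex set $A\cup B$ with $|A|=|B|=n$. So the proof is essentially a composition plus some bookkeeping of parameters, and I would invoke Theorem~\ref{thm:lower-bound-graph} as a black box.

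First I would fix constants: by Theorem~\ref{thm:lower-bound-graph}, for every sufficiently large $n$ there is a graph $G_n=(A,B,E)$ that is a $(c_1 n^2, c_2 n^{3/2})$-lower-bound graph for absolute constants $c_1,c_2>0$. Feeding $k=c_1 n^2$ and $m=c_2 n^{3/2}$ into Theorem~\ref{thm:lb-two-party}, any algorithm that detects $K_4$ with probability at least $\tfrac23$ on the network built from $G_n$ in that proof requires
\[
\Omega\!\left(\frac{c_1 n^2}{c_2 n^{3/2}\,\B}\right)=\Omega\!\left(\frac{\sqrt n}{\B}\right)
\]
rounds. The remaining step is to reconcile the parameter $n$ of the lower-bound graph with the node count in the statement: the graph $G'$ on which the reduction is simulated still has vertex set $A\cup B$, so the network has $N:=|A|+|B|=2n$ nodes; the bound therefore reads $\Omega(\sqrt N/\B)$, and renaming $N$ to $n$ yields exactly the claimed $\Omega(\sqrt n/\B)$. (Specializing $\B=\Order(\log n)$ recovers the $\Omega(\sqrt n/\log n)$ bound recorded in the introduction.)

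The one point that genuinely deserves a sanity check is that the parameters of the supplied graph compose to the target rate, i.e.\ that $k/m=\Theta(\sqrt n)$ — which is precisely why the construction is tuned to $k=\Omega(n^2)$, $m=\Order(n^{3/2})$, and why any improvement of the ratio $k/m$ would directly sharpen the round lower bound. Beyond that, I do not expect an obstacle at this stage: all the real difficulty — producing a bipartite graph that is dense enough to encode $\Omega(n^2)$ independent set-disjointness coordinates through $K_{2,2}$'s, while keeping the total number of edges (hence the cut) at $\Order(n^{3/2})$, and ensuring that the associated graphs $H_A,H_B$ are bipartite and the $K_{2,2}$'s are ``rigid'' as required by Items~3 and~4 of Definition~\ref{def:lb-graph} — is entirely shouldered by Theorem~\ref{thm:lower-bound-graph}, which this proof only needs to cite.
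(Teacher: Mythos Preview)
Your proposal is correct and matches the paper's own approach exactly: the paper derives Theorem~\ref{thm:lb-K4} in one line by combining Theorem~\ref{thm:lb-two-party} with the existence of a $(\Omega(n^2),\Order(n^{3/2}))$-lower-bound graph from Theorem~\ref{thm:lower-bound-graph}. Your additional bookkeeping about $N=2n$ versus $n$ is fine and does not appear explicitly in the paper, but it changes nothing asymptotically.
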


%---------------------------------------------------------------------------------------------------------------------------------------------------------

\subsection{Detection of $K_{\ell}$ for $\ell \ge 5$}
\label{subsec:lb-K_ell}

%---------------------------------------------------------------------------------------------------------------------------------------------------------

\begin{figure}[t]
\centerline
{
\includegraphics[width=0.22\textwidth]{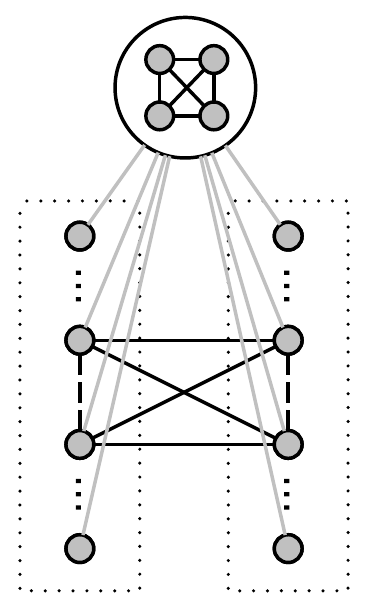}
}
\vspace{-5.6cm}

\hspace{8.75cm} $K_{\ell-4}$

\vspace{2.8cm}
\hspace{6.5cm} $x_i$ \hspace{2.0cm} $y_i$

\vspace{-2.55cm}
\hspace{6.6cm} $H_A$ \hspace{1.45cm} $H_B$

\vspace{-1.8cm}

\hspace{6.3cm} $G'$

\vspace{5.7cm}

\caption{Extension of our lower bound for $K_4$ detection to $K_{\ell}$ detection, for $\ell \ge 5$. We add a clique $K_{\ell - 4}$ on $\ell -4$ new vertices
to the graph $G'$ and connect every vertex of the clique to every other vertex of $G'$. Then the resulting graph contains a clique on $\ell$ vertices if and only
if the encoded set disjointness instance evaluates to 1, i.e., $x_i = y_i = 1$, for some $i$.}
\label{fig:lb-graphs-K-ell}
\end{figure}

%---------------------------------------------------------------------------------------------------------------------------------------------------------
%
The lower bound construction given in Theorem~\ref{thm:lb-two-party} can be extended to the task of detecting $K_{\ell}$, for $\ell \ge 5$ (see also Figure \ref{fig:lb-graphs-K-ell}). To this end, we add a clique on $\ell-4$ new nodes to graph $G'$ (from the proof of Theorem \ref{thm:lb-two-party}) and connect each of these nodes to every vertex in $A \cup B$. Observe that this increases the cut between $A$ and $B$ by $n(\ell-4)$ edges. For $\ell = \Order(\sqrt{n})$, there are only $\Order(n^{3/2})$ additional edges, which implies that the same lower bound as for $K_4$ holds. If $\ell = \omega(\sqrt{n})$, then the number of additional edges is significant, since the size of the cut increases by more than a constant factor. In this case, the round complexity is $\Omega(\frac{n^2}{n(\ell-4) \, \B}) = \Omega(\frac{n}{\ell \, \B})$. Similarly as before, the encoded set disjointness instance evaluates to 1 if and only if $G'$ contains a clique of size $\ell$. We thus conclude with the following theorem.

\begin{theorem}
\label{thm:lb-Kl}
Every algorithm in the \congestb model that detects $K_{\ell}$, for $\ell \ge 4$ and $\ell = \Order(\sqrt{n})$, with probability at least $\frac23$ requires $\Omega(\sqrt{n}/\B)$ rounds. If $\ell = \omega(\sqrt{n})$, then $\Omega(n/(\ell\,\B))$ rounds are required.
\end{theorem}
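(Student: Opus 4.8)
\textbf{Proof proposal for Theorem~\ref{thm:lb-Kl}.}
The plan is to obtain the whole statement as a consequence of the already-proved Theorem~\ref{thm:lb-K4} together with a padding argument that reduces $K_{\ell}$-detection to $K_4$-detection. The case $\ell = 4$ is exactly Theorem~\ref{thm:lb-K4}, so assume $\ell \ge 5$. I would start from the lower-bound graph $G' = G \cup (E'_A \cup E'_B)$ built in the proof of Theorem~\ref{thm:lb-two-party} from a size-$k$ set disjointness instance, where $G$ is a $(k,m)$-lower-bound graph on vertex set $A \cup B$ with $|A|=|B|=n$, and take the explicit $(\Omega(n^2),\Order(n^{3/2}))$-lower-bound graph guaranteed by Theorem~\ref{thm:lower-bound-graph}. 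I then augment $G'$ by adding a set $W$ of $\ell-4$ fresh vertices forming a clique $K_{\ell-4}$ and joining every vertex of $W$ to every vertex of $A \cup B$; call the result $G''$. Split $W$ arbitrarily (say half to Alice, half to Bob), so that the new vertices are simulated by the two players just like the vertices of $A$ and $B$ in the proof of Theorem~\ref{thm:lb-two-party}.

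The key correctness claim is that $G''$ contains a $K_{\ell}$ if and only if $\DISJ(X,Y)=1$. Since $W$ is a clique and every vertex of $W$ is adjacent to everything, any $K_4$ in $G'$ together with all of $W$ yields a $K_{\ell}$ in $G''$; by the analysis in the proof of Theorem~\ref{thm:lb-two-party}, such a $K_4$ exists exactly when $\DISJ(X,Y)=1$. Conversely, a $K_{\ell}$ in $G''$ can use at most $\ell-4$ vertices of $W$, hence must contain at least four vertices of $A\cup B$; since $G'[A]$ and $G'[B]$ are bipartite, these four vertices cannot all lie in $A$ or all in $B$, and in fact (inspecting the structure) they must restrict to a $K_4$ inside $G'$ supported on two vertices of $A$ and two of $B$, which again forces $\DISJ(X,Y)=1$. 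So any $r$-round \congestb algorithm for $K_{\ell}$-detection, run on $G''$ and simulated across the $(A\cup\text{(Alice's }W),\, B\cup\text{(Bob's }W)$)-cut, solves set disjointness on instances of size $k=\Omega(n^2)$.

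It remains to bound the cut size $\scut$ of $G''$ and apply Theorem~\ref{thm:comm}. The original cut has $\Order(n^{3/2})$ edges; adding $W$ and the all-to-all edges contributes at most $n(\ell-4) + \binom{\ell-4}{2} = \Order(n\ell)$ edges (for $\ell \le n$; the case $\ell>n$ is degenerate and handled separately or is vacuous for subgraph detection). Hence $\scut = \Order(n^{3/2} + n\ell) = \Order(n(\sqrt{n}+\ell))$, each round transmits $\Order(\scut\,\B)$ bits across the cut, and Theorem~\ref{thm:comm} gives $r\cdot\scut\cdot\B = \Omega(n^2)$, i.e. $r = \Omega\!\left(\frac{n^2}{n(\sqrt{n}+\ell)\,\B}\right) = \Omega\!\left(\frac{n}{(\sqrt{n}+\ell)\,\B}\right)$. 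For $\ell = \Order(\sqrt{n})$ this is $\Omega(\sqrt{n}/\B)$, and for $\ell = \omega(\sqrt{n})$ it is $\Omega(n/(\ell\,\B))$, as claimed. (A cosmetic point: if one insists the host graph have exactly $n$ vertices rather than $n + (\ell-4)$, replace $n$ by $n - \ell + 4 = \Theta(n)$ throughout, which does not change the asymptotics for $\ell = \Order(\sqrt n)$ and only helps for larger $\ell$.)

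The main obstacle, and the only place that needs real care rather than bookkeeping, is the converse direction of the correctness claim: verifying that every $K_{\ell}$ in $G''$ genuinely projects to a $K_4$ of the special two-in-$A$, two-in-$B$ form, so that Items~3 and~4 of Definition~\ref{def:lb-graph} can be invoked. This uses that $W$ contributes only $\ell-4$ vertices and that the bipartiteness of $H_A,H_B$ (hence of $G'[A],G'[B]$) is preserved in $G''$ restricted to $A$ and to $B$, since no edges within $A$ or within $B$ are added. Everything else is a direct transcription of the proof of Theorem~\ref{thm:lb-two-party} with the enlarged cut.
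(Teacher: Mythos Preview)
Your proposal is correct and follows essentially the same argument as the paper: augment the $K_4$ lower-bound instance $G'$ by a universal $K_{\ell-4}$, observe that $K_{\ell}$ appears iff the underlying set disjointness instance intersects, and rebalance the cut-size calculation to get $\Omega\!\big(n/((\sqrt{n}+\ell)\B)\big)$. Your write-up is in fact more explicit than the paper's one-paragraph sketch---in particular, you spell out the converse direction (any $K_{\ell}$ must meet $A\cup B$ in at least four mutually adjacent vertices, and bipartiteness of $G'[A]$, $G'[B]$ forces exactly two on each side) and the vertex-count cosmetic issue, neither of which the paper addresses.
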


%---------------------------------------------------------------------------------------------------------------------------------------------------------

\section{Lower-bound graph construction}
\label{sec:lb-graph-construction}

In this section, we construct our main technical tool and prove the existence of a $(\Omega(n^2), \Order(n^{3/2}))$-lower-bound graph, see Definition \ref{def:lb-graph}. We will show in Theorem \ref{thm:lower-bound-graph} that Algorithm~1 below constructs a $(\Omega(n^2), \Order(n^{3/2}))$-lower-bound graph with high probability (observe that a non-zero probability already suffices to prove the existence of such a graph).

%---------------------------------------------------------------------------------------------------------------------------------------------------------

\subsection{Construction of a $(\Omega(n^2), \Order(n^{3/2}))$-lower-bound graph}
\label{subsec:construction-lower-bound-graphs}
We proceed as follows. We start our construction with a bipartite random graph $G=(A, B, E)$ with $|A| = |B| = n$, where every potential edge $ab$ between $a \in A$ and $b \in B$ is included with probability $p = \frac{1}{\sqrt{n}}$. Observe that for any $a_1, a_2 \in A$ ($a_1 \neq a_2$) and $b_1, b_2 \in B$ ($b_1 \neq b_2$), the probability that $G[\{a_1, a_2, b_1, b_2 \}]$ is isomorphic to a $K_{2,2}$ is $p^4$. We therefore expect $G$ to contain ${n \choose 2}^2 p^4$ copies of $K_{2,2}$, and we prove in Lemma~\ref{lem:bound-1} below that, with high probability, the actual number of copies of $K_{2,2}$ does not deviate significantly from its expectation. Let $\mathcal{K}$ denote the set of copies of $K_{2,2}$ in $G$.

%---------------------------------------------------------------------------------------------------------------------------------------------------------
%\medskip
\begin{walgo}
\textbf{Algorithm 1.} Construction of a $(\Omega(n^2), \Order(n^{3/2}))$-lower-bound graph:\\

\textbf{Input:} Integer $n$, let $p = \frac{1}{\sqrt{n}}$.

\begin{enumerate}
\item \textbf{Random Graph:} \\
Let $G = (A, B, E)$ with $|A| = |B| = n$ be the bipartite random graph where
\\\mbox{}\qquad\qquad for every $a \in A, b \in B$ the edge $ab$ is included in $E$ with probability $p$.

Let $\mathcal{K}$ be the family of sets $\{a_1, a_2, b_1, b_2 \}$ with $a_1, a_2 \in A$, $a_1 \ne a_2$, $b_1, b_2 \in B$, $b_1 \ne b_2$
\\\mbox{}\qquad\qquad and $G[\{a_1, a_2, b_1, b_2 \}]$ isomorphic to $K_{2,2}$.

For $S \subseteq A \cup B$, let $\mathcal{K}(S) \subseteq \mathcal{K}$ be the family of subsets $K$ with $S \subseteq K$.

\item \textbf{Peeling Process:} \\
Let $A' \subseteq A$ and $B' \subseteq B$ be a uniform random sample of $A$ and $B$, respectively,
\\\mbox{}\qquad\qquad where every vertex is included with probability $\frac12$.

$\mathcal{H} \gets \{ \}$, $F_A \gets \{ \}$, $F_B \gets \{ \}$.

\textbf{for} every $K = \{a_1, a_2, b_1, b_2 \} \in \mathcal{K}$ \textbf{do}

$\quad$ \textbf{if} $|\mathcal{K}(\{a_1, a_2 \})| \le 6$ and $|\mathcal{K}(\{b_1, b_2 \})| \le 6$ and $|\{a_1, a_2 \} \cap A'| = |\{b_1, b_2 \} \cap B'| = 1$ and
    \\
\mbox{}\qquad\qquad $\{a_1, a_2 \} \notin F_A$ and $\{b_1, b_2 \} \notin F_B$ \textbf{then}

$\quad$ $\quad$ $\mathcal{H} \gets \mathcal{H} \cup K$.

$\quad$ $\quad$ For every $\{a_1, a_2, b_3, b_4 \} \in \mathcal{K}(\{a_1, a_2 \})$, add $\{b_3, b_4 \}$ to $F_B$.

$\quad$ $\quad$ For every $\{a_3, a_4, b_1, b_2 \} \in \mathcal{K}(\{b_1, b_2 \})$, add $\{a_3, a_4 \}$ to $F_A$.

$\quad$ \textbf{end if}

\textbf{end for}

\item \textbf{Lower Bound Graph $H$:} \\
For $K = \{a_1, a_2, b_1, b_2\} \in \mathcal{H}$, let $E_K$ be the edge set $\{a_1b_1, a_1b_2, a_2b_1, a_2b_2 \}$.

\textbf{return} $H := (A, B, \bigcup_{K \in \mathcal{H}} E_K)$.
\end{enumerate}
\end{walgo}
%---------------------------------------------------------------------------------------------------------------------------------------------------------

In the peeling phase, we greedily compute a subset $\mathcal{H} \subseteq \mathcal{K}$ such that at the end, the graph induced by the edges of $\mathcal{H}$ is a $(\Omega(n^2), \Order(n^{3/2}))$-lower bound graph. When inserting a set $K = \{a_1, a_2, b_1, b_2 \} \in \mathcal{K}$ into $\mathcal{H}$, we make sure that the following three properties are fulfilled:
\begin{enumerate}
\item We ensure that later on we will never add a $K' = \{a_1', a_2', b_1', b_2' \}$ such that either $\{a_1, a_2, b_1', b_2'\}$ or $\{a_1', a_2', b_1, b_2 \}$ form a $K_{2,2}$. To this end, when inserting $K$ into $\mathcal{H}$, for every $K' \in \mathcal{K}$ that contains the same pair of $A$-vertices (or $B$-vertices), we add its pair of $B$ vertices (resp. pair of $A$ vertices) to set $F_B$ (resp. $F_A$), indicating that this is a forbidden pair. Then, when inserting an element of $\mathcal{K}$ into $\mathcal{H}$, we make sure that its pairs of $A$ and $B$ vertices are not forbidden.

\item We make sure that the insertion of $K$ will not prevent too many other sets $K'$ from being inserted into $\mathcal{H}$. To this end, we guarantee that there are at most six other sets in $\mathcal{K}$ that share the same pair of $A$ vertices and at most six other sets that share the same pair of $B$ vertices. We prove in Lemma~\ref{lem:bound-2} that most $K \in \mathcal{K}$ fulfill this property.

\item It is required that the graphs $G_A$ and $G_B$ as defined in Item~4 of Definition~\ref{def:lb-graph} are bipartite. We therefore partition the sets $A$ and $B$ randomly into subsets $A'$ and $A \setminus A'$, and $B'$ and $B \setminus B'$, and only add $K$ to $\mathcal{H}$ if exactly one of its $A$ vertices is in $A'$ and one of its $B$ vertices is in $B'$.
\end{enumerate}

In the last step of the algorithm, we assemble graph $H$ as the union of the edges contained in the copies of $K_{2,2}$ in $\mathcal{H}$.

%---------------------------------------------------------------------------------------------------------------------------------------------------------

\subsection{Analysis of Algorithm 1}

Our analysis relies on some basic properties of the structure of subgraphs of random graphs (for a more complete treatment of related problems, see, e.g., \cite[Chapter~3]{JLR11}). We prove three high probability claims about the construction in Algorithm~1: that the random graph $G$ contains many copies of $K_{2,2}$ (Lemma \ref{lem:bound-1}), that only a small fraction of pairs of $A$ vertices are contained in more than six copies of $K_{2,2}$ (Lemma \ref{lem:bound-2}), and finally that the resulting graph $H$ contains $\Omega(n^2)$ copies of $K_{2,2}$ (Lemma \ref{lem:bound-3}). With these three claims at hand, we will complete the analysis to prove in Theorem~\ref{thm:lower-bound-graph} that with high probability, the output of Algorithm~1 is a $(\Omega(n^2), \Order(n^{3/2}))$-lower-bound graph.

We begin with a proof that in Algorithm~1, the random graph $G$ contains many copies of $K_{2,2}$.

\begin{lemma}
\label{lem:bound-1}
Suppose that $p \ge \frac{1}{n}$. Then there is a constant $C$ such that
\begin{displaymath}
    \Pr \left[ |\mathcal{K}| \le \frac{9}{10} \binom{n}{2}^2 p^4 \right]
        \le
    C \cdot \frac{1}{n^2 p}
    \enspace.
\end{displaymath}
\end{lemma}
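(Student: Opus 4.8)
The claim is a lower-tail concentration bound for $|\mathcal{K}|$, the number of copies of $K_{2,2}$ in the bipartite random graph $G = G(n,n,p)$. The standard tool for this is the second moment method: if we write $|\mathcal{K}| = \sum_{Q} \mathbf{1}_Q$, where the sum ranges over the $\binom{n}{2}^2$ candidate quadruples $Q = \{a_1,a_2,b_1,b_2\}$ and $\mathbf{1}_Q$ indicates that $G[Q] \cong K_{2,2}$, then $\mu := \Exp[|\mathcal{K}|] = \binom{n}{2}^2 p^4$, and by Chebyshev's inequality
\begin{displaymath}
    \Pr\left[ |\mathcal{K}| \le \tfrac{9}{10}\mu \right]
        \le \Pr\left[ \big| |\mathcal{K}| - \mu \big| \ge \tfrac{1}{10}\mu \right]
        \le \frac{100\,\Var[|\mathcal{K}|]}{\mu^2}
    \enspace.
\end{displaymath}
So it suffices to show $\Var[|\mathcal{K}|] = \Order(\mu^2 / (n^2 p))$, i.e. that the variance is smaller than the square of the mean by a factor $\Omega(n^2 p)$.

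**Computing the variance.** The plan is to expand $\Var[|\mathcal{K}|] = \sum_{Q, Q'} \big(\Pr[\mathbf{1}_Q = \mathbf{1}_{Q'} = 1] - \Pr[\mathbf{1}_Q=1]\Pr[\mathbf{1}_{Q'}=1]\big)$ and observe that the summand vanishes unless the edge sets $E(G[Q])$ and $E(G[Q'])$ of the two potential $K_{2,2}$'s share at least one edge; equivalently, unless $Q$ and $Q'$ share at least two vertices, at least one on each side. I would group the nonzero terms by the overlap pattern $(i,j)$ where $i = |\{a\text{-vertices of }Q\} \cap \{a\text{-vertices of }Q'\}| \in \{1,2\}$ and $j$ defined analogously on the $B$-side, $j \in \{1,2\}$, excluding $(2,2)$ which is the diagonal $Q=Q'$. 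For each pattern one counts the number of such ordered pairs (which is $\Order(n^{4+(2-i)+(2-j)}) = \Order(n^{8-i-j})$) and bounds the covariance of each pair by $\Pr[\text{both are }K_{2,2}] \le p^{8 - (\text{number of shared edges})}$; the number of shared edges for pattern $(i,j)$ is exactly $(i-1)$ or so — more carefully, pattern $(2,1)$ shares one edge? No: two shared $a$'s and one shared $b$ means the shared vertices are $\{a_1,a_2,b_1\}$ and the edges $a_1b_1, a_2b_1$ are common, so $2$ shared edges; pattern $(1,2)$ likewise $2$; pattern $(1,1)$ shares one vertex on each side, hence the single edge $a_1b_1$, so $1$ shared edge. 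Thus the contribution of pattern $(i,j)$ is $\Order\big(n^{8-i-j}\cdot p^{8-s(i,j)}\big)$ where $s(1,1)=1$, $s(2,1)=s(1,2)=2$. The dominant term turns out to be pattern $(1,1)$: $\Order(n^6 p^7) = \Order(n^8 p^8 / (n^2 p)) = \Order(\mu^2/(n^2 p))$, using $\mu^2 = \Theta(n^8 p^8)$. The $(2,1)$ and $(1,2)$ patterns give $\Order(n^5 p^6) = \Order(\mu^2/(n^3 p^2))$, which is smaller under $p \ge 1/n$ (so that $n^3 p^2 \ge n p \cdot (n^2 p) \ge n^2 p$). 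Summing all patterns gives $\Var[|\mathcal{K}|] = \Order(\mu^2/(n^2 p))$, and Chebyshev finishes the proof with $C$ an absolute constant.

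**Where the care is needed.** None of this is deep; the main thing to get right is the bookkeeping of overlap patterns and the verification that, under the hypothesis $p \ge 1/n$, the $(1,1)$ term genuinely dominates (so the other patterns can be absorbed into the constant $C$) — in particular checking $n^{8-i-j} p^{8-s(i,j)} = \Order(n^6 p^7)$ for $(i,j)\in\{(2,1),(1,2),(2,2)\}$, which reduces to $np \ge \Omega(1)$, i.e. exactly $p \ge 1/n$. One should also be slightly careful that $G[Q] \cong K_{2,2}$ is the event that all four cross-edges of the quadruple are present (there are no other edges to worry about since $G$ is bipartite), so $\Pr[\mathbf{1}_Q = 1] = p^4$ exactly and the inclusion–exclusion above is clean. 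A minor alternative to Chebyshev would be to invoke a Janson-type or Kim–Vu polynomial-concentration inequality, but Chebyshev is entirely sufficient here and matches the $\Order(1/(n^2 p))$ form of the claimed bound, so I would use it.
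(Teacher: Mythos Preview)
Your approach is exactly the paper's: compute $\mu = \binom{n}{2}^2 p^4$, bound $\Var[|\mathcal K|]$ by a case analysis over the overlap type of pairs $(Q,Q')$, identify the dominant contribution as $\Order(n^6 p^7)$, and finish with Chebyshev. The paper parametrizes the cases by $|Q\cap Q'|\in\{0,\dots,4\}$ (splitting $|Q\cap Q'|=2$ into ``same side'' and ``one on each side''), which is equivalent to your $(i,j)$ parametrization; the computations and the final bound coincide.

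One small bookkeeping slip, which the paper shares: you claim that absorbing the $(2,1)$, $(1,2)$, and $(2,2)$ contributions into $\Order(n^6 p^7)$ ``reduces to $np \ge \Omega(1)$''. That is true for $(2,1)$ and $(1,2)$, but the diagonal term $(2,2)$ contributes $\Theta(n^4 p^4)$, and $n^4 p^4 = \Order(n^6 p^7)$ needs $n^2 p^3 \ge \Omega(1)$, i.e.\ $p \ge \Omega(n^{-2/3})$, not merely $p \ge 1/n$. This is harmless for the paper's application ($p = n^{-1/2}$), but your sentence about ``exactly $p \ge 1/n$'' is not quite right.
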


\begin{proof}
We will compute the expectation and the variance of $|\mathcal{K}|$ and then use Chebyshev's inequality to bound the probability that $|\mathcal{K}|$ deviates substantially from its expectation.

Let $\mathcal{X}$ be the family of all sets $\{a_1, a_2, b_1, b_2 \}$ with $a_1, a_2 \in A$, $a_1 \ne a_2$, $b_1, b_2 \in B$, $b_1 \ne b_2$, and for $X \in \mathcal{X}$ let $\chi(X)$ be the indicator variable of the event ``$G[X]$ is isomorphic to $K_{2,2}$''. Then:
\begin{displaymath}
    \Exp |\mathcal{K}|
        =
    \sum_{X \in \mathcal{X}} \Pr \left[ \chi(X) = 1 \right]
        =
    |\mathcal{X}| p^4
        =
    \binom{n}{2}^2 p^4
    \enspace,
\end{displaymath}
since $K_{2,2}$ contains $4$ edges. To bound the variance $\Var |\mathcal{K}|$, we use the identity $\Var |\mathcal{K}| = \Exp |\mathcal{K}|^2 - \left( \Exp  |\mathcal{K}| \right)^2$:
\begin{displaymath}
    \Exp |\mathcal{K}|^2
        =
    \Exp \left( \sum_{X \in \mathcal{X}} \chi(X) \right)^2
        =
    \Exp \sum_{X,Y \in \mathcal{X}} \chi(X) \cdot \chi(Y)
        =
    \sum_{X,Y \in \mathcal{X}} \Exp (\chi(X) \cdot \chi(Y))
    % = \sum_{X,Y \in \mathcal{X}} p^{|V(K_i) \cup V(K_j)|}
    \enspace.
\end{displaymath}
We distinguish the following cases:
\begin{itemize}
 \item $|X \cap Y| = 0$. Then, $\Exp (\chi(X) \cdot \chi(Y)) = p^8$. Observe that there are $t_0 = {n \choose 2}^2 {n-2 \choose 2}^2$ such pairs.
 \item $|X \cap Y| = 1$. Then, $\Exp (\chi(X) \cdot \chi(Y)) = p^8$. There are $t_1 = 4 {n \choose 2}^2 {n-2 \choose 2}  {n-2 \choose 1}$ such pairs.
 \item $|X \cap Y| = 2$ and the intersection consists of either two $A$-vertices or two $B$-vertices. Then, $\Exp (\chi(X) \cdot \chi(Y)) = p^8$ and there are $t_{2,1} = 2 \cdot {n \choose 2}^2 {n-2 \choose 2}$ such pairs.
 \item $|X \cap Y| = 2$ and the intersection consists of one $A$-vertex and one $B$-vertex. Then, $\Exp (\chi(X) \cdot \chi(Y)) = p^7$ and there are $t_{2,2} = 4 \cdot {n \choose 2}^2 \cdot (n-2)^2$ such pairs.
 \item $|X \cap Y| = 3$. Then, $\Exp (\chi(X) \cdot \chi(Y)) = p^6$. There are $t_3 = 4 \cdot {n \choose 2}^2 \cdot (n-2)$ such pairs.
 \item $|X \cap Y| = 4$. Then, $\Exp (\chi(X) \cdot \chi(Y)) = p^4$. There are $t_4 = {n \choose 2}^2$ such pairs.
\end{itemize}

%Observe that for all pairs $X = \{a_1, a_2, b_1, b_2 \}$ and $Y$ with $\{a_1, a_2 \} \in X \setminus Y$ or $\{b_1, b_2 \} \in X \setminus Y$, we have $\Exp (\chi(X) \cdot \chi(Y)) = p^8$. There are $t_8 := 2\binom{n}{2}^3 \binom{n-2}{2}$ such pairs. The remaining cases are as follows:
%\begin{itemize}
%\item $X \setminus Y = \{a, b\}$ for $a \in \{a_1, a_2 \}$ and $b \in \{b_1, b_2\}$ in which case we have $\Exp (\chi(X) \cdot \chi(Y)) = p^7$. There are $t_7 := 4 \binom{n}{2}^2 \binom{n-2}{1}^2$ such pairs.
%\item $X \setminus Y = \{ x \}$, for $x \in \{a_1, a_2, b_1, b_2 \}$. Then $\Exp (\chi(X) \cdot \chi(Y)) = p^6$ and there are $t_6 := 4 \binom{n}{2}^2 \binom{n-2}{1}$ such pairs.
%\item $X = Y$. Then $\Exp (\chi(X) \cdot \chi(Y)) = p^4$. There are $t_4 := \binom{n}{2}^2$ such cases.
%\end{itemize}

\noindent A quick sanity check shows that $t_0 + t_1 + t_{21} + t_{22} + t_3 + t_4 = {n \choose 2}^4$. We thus obtain:
\begin{align*}
    \Var |\mathcal{K}|
       & =
    \Exp |\mathcal{K}|^2 - \left( \Exp |\mathcal{K}| \right)^2
        = p^8 (t_0 + t_1 + t_{2,1}) + p^7 t_{2,2} + p^6 t_3 + p^4 t_4 - {n \choose 2}^4 p^8 \\
       & \le p^7 t_{2,2} + p^6 t_3 + p^4 t_4 = \Order(p^7 n^6) \ ,
\end{align*}
where the last equality holds for every $p \ge \frac{1}{n}$. We apply Chebyshev's inequality and obtain:

%Thus:
%\begin{displaymath}
%    \Var |\mathcal{K}|
%        =
%    \Exp |\mathcal{K}|^2 - \left( \Exp |\mathcal{K}| \right)^2
%        =
%    t_8 p^8 + \Order(n^6 p^7) - \Exp |\mathcal{K}|^2
%        =
%    \Order(n^6 p^7)
%    \enspace.
%\end{displaymath}

\begin{displaymath}
    \Pr \left[\Big||\mathcal{K}| - \Exp|\mathcal{K}|\Big| \ge
            \frac{1}{10} \Exp|\mathcal{K}|\right]
        \le
    \frac{100 \Var |\mathcal{K}|}{ (\Exp |\mathcal{K}|)^2}
        =
    C \cdot \frac{1}{n^2 p}
    \enspace,
\end{displaymath}
for some constant $C$.
\end{proof}

Next, we prove that only a small fraction of pairs of $A$ vertices are contained in more than six copies of $K_{2,2}$.

\begin{lemma}
\label{lem:bound-2}
Let $p = \frac{1}{\sqrt{n}}$. For every constant $\delta > 0$, with high probability, there are at most $(1+\delta) n^2 / 10$ pairs of distinct vertices $a_1, a_2 \in A$ with $|\mathcal{K}(\{a_1, a_2 \})| > 6$.
\end{lemma}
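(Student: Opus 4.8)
The plan is to count pairs $\{a_1,a_2\}\subseteq A$ for which $|\mathcal{K}(\{a_1,a_2\})|>6$, i.e.\ pairs having more than six common neighbors in $B$. For a fixed pair $a_1\neq a_2$, let $N=|\{b\in B: a_1b,a_2b\in E\}|$ be the number of common neighbors; then $|\mathcal{K}(\{a_1,a_2\})|=\binom{N}{2}$, so the event $|\mathcal{K}(\{a_1,a_2\})|>6$ is exactly the event $N\ge 4$. Now $N$ is a sum of $n$ independent indicators, each equal to $1$ with probability $p^2=1/n$, so $N\sim\mathrm{Bin}(n,1/n)$ with $\Exp N=1$. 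A direct computation (or a Poisson/binomial tail bound, e.g.\ $\Pr[N\ge 4]\le \binom{n}{4}p^{8}\le \frac{1}{24}$) shows that $\Pr[N\ge 4]\le q$ for an explicit constant $q$. The key point is that this constant is strictly less than $1/10$: indeed $\binom{n}{4}n^{-4}\to 1/24<1/10$, so for all sufficiently large $n$ we have $\Pr[|\mathcal{K}(\{a_1,a_2\})|>6]\le 1/10 - 2\delta'$ for some small $\delta'=\delta'(\delta)>0$. Hence, letting $Z$ denote the number of \emph{bad} pairs, linearity of expectation gives $\Exp Z\le (1/10-2\delta')\binom{n}{2}\le (1/10-\delta')\,n^2/2 \le (1-\delta)\,n^2/10$ for $n$ large, which already shows the bound holds in expectation; it remains to boost this to a high-probability statement.

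For concentration I would use a second-moment / Chebyshev argument, exactly in the style of Lemma~\ref{lem:bound-1}. Write $Z=\sum_{\{a_1,a_2\}} \mathbf{1}[N_{a_1,a_2}\ge 4]$ and bound $\Var Z$ by splitting the cross-terms according to the overlap of the index pairs: pairs $\{a_1,a_2\}$ and $\{a_3,a_4\}$ that are disjoint give negligible covariance (the two common-neighbor counts are independent, so the covariance is $0$), and the $\Order(n^3)$ pairs sharing exactly one $A$-vertex contribute only a lower-order term to $\Var Z$ because each bad event has probability $\Order(1)$ and there are only $\binom{n}{2}=\Theta(n^2)$ "diagonal" terms. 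Concretely one gets $\Var Z = \Order(n^3)$ (or even $\Order(n^2)$ after a more careful look, since sharing one vertex still forces $N\ge 4$ on two overlapping neighborhoods whose correlation is $\Order(1/n)$), which is $o((\Exp Z)^2)=o(n^4)$. Chebyshev's inequality then yields $\Pr\big[Z\ge (1+\delta)n^2/10\big]=\Pr\big[Z-\Exp Z\ge \delta' n^2/2\big]\le \Order(\Var Z)/(\delta' n^2/2)^2 = \Order(1/n)$, establishing the high-probability claim.

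The main obstacle I anticipate is pinning down the constant in $\Pr[N\ge 4]$ tightly enough: we need it genuinely below $1/10$ (so that even after the $(1+\delta)$ slack the statement is non-vacuous), which forces us to work with $N\ge 4$ — equivalently $\binom N2\ge 7>6$ — rather than a cruder threshold, and to verify $\binom{n}{4}p^{8}=\binom n4 n^{-4}< 1/10$ for all relevant $n$ (it tends to $1/24$, but one should check small $n$ or simply assert the claim for $n$ large enough, which suffices since we only need existence). A secondary, purely bookkeeping obstacle is organizing the variance computation so that the dominant diagonal term $\sum_{\{a_1,a_2\}}\Pr[N\ge 4]\big(1-\Pr[N\ge 4]\big)=\Order(n^2)$ and the one-vertex-overlap terms are both seen to be $o(n^4)$; this is routine but needs the same care as the case analysis in the proof of Lemma~\ref{lem:bound-1}. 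By symmetry, the identical statement holds for pairs in $B$, and the threshold $6$ chosen in Algorithm~1 is exactly what makes the bad-pair probability fall below $1/10$.
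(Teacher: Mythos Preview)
Your approach is correct and genuinely different from the paper's. One small slip: the event $|\mathcal{K}(\{a_1,a_2\})|>6$ is $N\ge 5$, not $N\ge 4$, since $\binom{4}{2}=6$. This is harmless for your argument (you are overcounting bad pairs, and your union bound $\Pr[N\ge 4]\le\binom{n}{4}n^{-4}\to 1/24<1/10$ still suffices), but the ``exact equivalence'' claim and the closing remark about the threshold $6$ being ``exactly'' what pushes the probability below $1/10$ should be adjusted.

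The paper takes a rather different route to concentration. Instead of a second-moment computation on the total count $Z$, it partitions the set of pairs $\{a_1,a_2\}\subseteq A$ into $n-1$ perfect matchings $\mathcal{X}_1,\dots,\mathcal{X}_{n-1}$ of $K_n$; within each $\mathcal{X}_i$ the pairs are vertex-disjoint, so the indicators $\chi(P)=\mathbf{1}[|B(P)|\ge 5]$ are independent, and a Chernoff bound controls $\sum_{P\in\mathcal{X}_i}\chi(P)$ around its mean $n/10$ (the paper bounds $\Pr[|B(P)|\ge 5]\le 1/5$ via Markov rather than a binomial tail). A union bound over the $n-1$ matchings finishes. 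This decomposition buys exponential tails and sidesteps any covariance bookkeeping entirely; your Chebyshev argument is more direct and stays closer to the method of Lemma~\ref{lem:bound-1}, at the price of only polynomial failure probability $\Order(1/n)$ --- still ``with high probability'', so perfectly adequate here. Your crude $\Order(n^3)$ variance bound (disjoint pairs independent, $\Order(n^3)$ overlapping pairs with covariance $\Order(1)$) is all that is needed; the parenthetical $\Order(n^2)$ refinement is unnecessary and not obviously easy.
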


\begin{proof}
Let $a_1, a_2 \in A$, $a_1 \ne a_2$ be arbitrary vertices. Let $B(\{a_1, a_2\}) \subseteq B$ be the set of vertices such that $a_1b, a_2b \in E$. Observe that $|\mathcal{K}(\{a_1, a_2 \})| = \binom{|B(\{a_1, a_2\})|}{2}$. By linearity of expectation, $\Exp |B(\{a_1, a_2\})| = n p^2 = 1$. % (and hence %$\Exp |\mathcal{K}(\{a_1, a_2 \})| = n^2 p^4)$.

Let $\mathcal{X}$ be the family of all sets of vertices $\{a_1, a_2 \} \subseteq A$ with $a_1 \ne a_2$. Partition now $\mathcal{X}$ into disjoint subsets such that $\mathcal{X} = \mathcal{X}_1 \cup \mathcal{X}_2 \cup \dots \cup \mathcal{X}_{n-1}$, where $|\mathcal{X}_i| = n/2$ and, for every $1 \le i \le n-1$, all elements of $\mathcal{X}_i$ are pairwise disjoint (such a partitioning corresponds to partitioning the complete graph $K_n$ into $n-1$ perfect matchings). For a pair of vertices $P \in \mathcal{X}$, let $\chi(P)$ be the indicator variable of the event ``$|B(P)| \ge 5$''. Recall that $\Exp |B(P)| = n p^2 = 1$ (since $p = 1 / \sqrt{n}$). Hence, by Markov's inequality, we have $\Pr [\chi(P) = 1 ] \le \frac{1}{5}$.

For every $1 \le i \le n-1$ we have $\Exp \sum_{P \in \mathcal{X}_i} \chi(P) \le \frac{1}{5} \frac{n}{2} = \frac{n}{10}$. Observe further that for every $P, Q \in \mathcal{X}_i$, $P \ne Q$, the random variables $B(P)$ and $B(Q)$ are independent. Thus, by a Chernoff bound (for $\mu = \frac{n}{10}$):

\begin{displaymath}
    \Pr\left[|\sum_{S \in \mathcal{X}_i}\chi(S)-\mu| \ge \delta \mu \right]
        \le
    2 \exp \left( - \mu \delta^2 / 3 \right)
        =
    e^{-\Theta(n)}
    \enspace,
\end{displaymath}
for any constant $\delta$. Thus, applying the union bound for every $1 \le i \le n-1$, with high probability, at most $(1+\delta) \frac{n}{10} \cdot (n-1) \le (1+\delta) n^2/10$ pairs of vertices are both connected to at least $5$ vertices of $B$. Hence, at most $(1+\delta) n^2/10$ pairs of vertices $\{a_1, a_2 \}$ are such that $\mathcal{K}(\{a_1, a_2 \}) > {4 \choose 2} = 6$.
\end{proof}

In the next lemma, we show that our resulting graph $H$ contains $\Omega(n^2)$ copies of $K_{2,2}$.

\begin{lemma}
\label{lem:bound-3}
With high probability, the number of copies of $K_{2,2}$ in $H$ is $|\mathcal{H}| = \Omega(n^2)$.
\end{lemma}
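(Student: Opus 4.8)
The plan is to track a single quantity through the peeling process: the number of copies of $K_{2,2}$ that have been \emph{destroyed} (made ineligible for future insertion into $\mathcal{H}$) by the time the loop finishes, and to show this is $\Order(n^2)$ with high probability, so that $\Omega(n^2)$ copies remain to be picked up. First I would condition on the high-probability events guaranteed by Lemma~\ref{lem:bound-1} and Lemma~\ref{lem:bound-2}: that $|\mathcal{K}| \ge \tfrac{9}{10}\binom{n}{2}^2 p^4 = \Omega(n^2)$ (using $p = 1/\sqrt n$, so $\binom n2^2 p^4 = \Theta(n^2)$), and that at most $(1+\delta)n^2/10$ pairs in $A$ (and symmetrically in $B$) are \emph{heavy}, i.e.\ lie in more than six copies of $K_{2,2}$. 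Call a copy $K = \{a_1,a_2,b_1,b_2\} \in \mathcal{K}$ \emph{good} if neither $\{a_1,a_2\}$ nor $\{b_1,b_2\}$ is heavy. Since each heavy pair $\{a_1,a_2\}$ can belong to at most $\binom{|B(\{a_1,a_2\})|}{2}$ copies and the contribution of heavy pairs to $|\mathcal{K}|$ can be bounded (this uses a slightly finer version of the counting in Lemma~\ref{lem:bound-2}, or simply the observation that the expected number of copies through pairs with $|B(\cdot)|$ large is $\Order(n^2)$ in total), the number of good copies is still $\Omega(n^2)$.

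Next I would restrict attention to the \emph{bipartiteness filter}: a good copy $K$ survives the sampling test only if $|\{a_1,a_2\}\cap A'| = |\{b_1,b_2\}\cap B'| = 1$, which happens with probability exactly $\tfrac12 \cdot \tfrac12 = \tfrac14$, independently for the $A$-side and $B$-side samples. By linearity of expectation the expected number of good copies passing this filter is $\Omega(n^2)$, and a concentration argument — Chebyshev suffices, bounding the variance by noting two copies' indicator variables are independent unless they share a pair, and heavy pairs have been excluded so each good pair is shared by $\Order(1)$ copies — shows that with high probability $\Omega(n^2)$ good copies pass the filter. Call these the \emph{candidates}. Now I run the greedy loop. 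A candidate $K$ fails to be added to $\mathcal{H}$ only if, at the moment it is processed, $\{a_1,a_2\} \in F_A$ or $\{b_1,b_2\} \in F_B$; and a pair enters $F_A$ (resp.\ $F_B$) only when some $K'' = \{a_3,a_4,b_1,b_2\}$ (resp.\ $\{a_1,a_2,b_3,b_4\}$) is added to $\mathcal{H}$, where that added copy shares a good (hence light) pair. Each insertion of a copy into $\mathcal{H}$ therefore adds at most $\binom{4}{2} - 1 = 5$ pairs to $F_A$ and at most $5$ to $F_B$ (since the shared pair is light, it lies in at most six copies), so it blocks at most $10$ other candidates. Hence $|\mathcal{H}| \ge \tfrac{1}{11}\,(\text{number of candidates that are never blocked by}~F)$; more carefully, a standard greedy-packing argument gives $|\mathcal{H}| \ge (\text{number of candidates})/(1 + 10) = \Omega(n^2)$.

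The main obstacle I anticipate is the concentration step for the number of candidates: the copies of $K_{2,2}$ in the random graph $G$ are not independent (two copies sharing a single $A$–$B$ edge, or a pair of vertices, are positively correlated), so I must carefully partition the variance contribution by intersection pattern — exactly as in the proof of Lemma~\ref{lem:bound-1} — and check that, after discarding heavy pairs, the dominant terms are $\Order((\Exp)^2 / n^{\Omega(1)})$. A secondary subtlety is that the sets $A', B'$ and the randomness of $G$ interact: I would handle this by first conditioning on $G$ (so $\mathcal{K}$, the set of good copies, and the heavy-pair structure are fixed and satisfy the Lemma~\ref{lem:bound-1}/\ref{lem:bound-2} conclusions), and only then taking probability over the independent sampling of $A'$ and $B'$; the bound on blocked candidates in the greedy loop is deterministic given the candidate set, so no further probabilistic argument is needed there. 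Finally one observes $\Omega(n^2)$ copies in $\mathcal{H}$ is what the lemma claims, and since each contributes at most four edges, this also foreshadows the $\Order(n^{3/2})$ edge bound, though that is not needed here.
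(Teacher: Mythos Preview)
Your plan follows essentially the same route as the paper's proof: condition on Lemmas~\ref{lem:bound-1} and~\ref{lem:bound-2}, restrict to ``good'' copies $\mathcal{K}'$ with light $A$- and $B$-pairs, pass to the subset $\mathcal{K}''$ surviving the $A',B'$ sampling (probability $\tfrac14$ each, then concentrate), and finally run a greedy charging argument showing each insertion into $\mathcal{H}$ blocks only $\Order(1)$ other candidates. Your handling of the concentration step is in fact more careful than the paper's, which simply invokes ``a Chernoff bound'' even though the sampling indicators for overlapping copies are not independent; your Chebyshev-with-variance-bound argument (pairwise independence unless two copies share a same-side pair, and light pairs are shared by at most six copies) is the right fix.

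There is, however, a concrete error in your blocking count. You write that an insertion ``adds at most $\binom{4}{2}-1=5$ pairs to $F_A$ and at most $5$ to $F_B$ \dots\ so it blocks at most $10$ other candidates.'' The first part is essentially right (the inserted copy has light pairs, so $|\mathcal{K}(\{a_1,a_2\})|\le 6$ and $|\mathcal{K}(\{b_1,b_2\})|\le 6$, giving at most $6$ pairs added to each of $F_A,F_B$). But a single pair placed in $F_B$ does not block just one candidate: it blocks \emph{every} candidate whose $B$-pair equals that pair, and since candidates are good this is up to $6$ of them. So each insertion blocks up to $6\cdot 6 + 6\cdot 6 = 72$ candidates, not $10$; this is exactly the constant the paper obtains. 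The error is only in the constant and does not affect the $\Omega(n^2)$ conclusion, but you should correct the reasoning: the charging is two-level (insertion $\to$ forbidden pairs $\to$ blocked candidates), not one-level.
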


\begin{proof}
By Lemma~\ref{lem:bound-1}, we have $|\mathcal{K}| \ge \frac{9}{40}(n-1)^2$ with high probability.
Let $\mathcal{K}' \subseteq \mathcal{K}$ be the subset of sets $\{a_1, a_2, b_1, b_2 \}$ with $\mathcal{K}(\{a_1, a_2 \}) \le 6$ and $\mathcal{K}(\{b_1, b_2 \}) \le 6$. By Lemma~\ref{lem:bound-2}, with high probability, $|\mathcal{K}'| \ge |\mathcal{K}| - 2 \cdot (1+\delta) n^2 / 10$, for any small constant $\delta$.

Let $\mathcal{K}'' \subseteq \mathcal{K}'$ be the subset of sets $\{a_1, a_2, b_1, b_2 \}$ with $|\{a_1, a_2 \} \cap A'| = |\{b_1, b_2 \} \cap A'| = 1$. Observe that every set $X \in \mathcal{K}'$ is included in $\mathcal{K}''$ with probability $\frac{1}{4}$. Thus, by a Chernoff bound, $|\mathcal{K}''| \ge |\mathcal{K}'| / 8$ with high probability.

We argue next that the insertion of any set $K \in \mathcal{K}'$ can block at most $2 \cdot 6^2 = 72$ other sets of $\mathcal{K}'$ from being inserted into $\mathcal{H}$. Consider thus a set $K = \{a_1, a_2, b_1, b_2 \} \in \mathcal{K}'$ that is added to $\mathcal{H}$. This inserts at most six pairs $\{a_3, a_4 \}$ into $F_A$ and six pairs $\{b_3, b_4 \}$ into $F_B$, since $\mathcal{K}(\{a_1, a_2 \}) \le 6$ and $\mathcal{K}(\{b_1, b_2 \}) \le 6$. Since each pair in $F_A$ or in $F_B$ can block at most another six sets of $\mathcal{K}'$, overall at most $2 \cdot 6^2 = 72$ sets of $\mathcal{K}'$ can be blocked by the insertion of $K$ into $\mathcal{H}$.

Hence:
\begin{align*}
    |\mathcal{H}|
        & \ge
    \frac{|\mathcal{K}''|}{72} \ge
    \frac{|\mathcal{K}'|}{8 \cdot 72} \ge
    \frac{(|\mathcal{K}| - 2 \cdot (1+\delta) n^2 / 10)}{8 \cdot 72} \ge
    \frac{(\frac{9}{40}(n-1)^2 - (1+\delta) n^2 / 5)}{8 \cdot 72}
%        \\
%    &
    =
%    \Omega \left(\left(\frac{9}{40} - \frac{1+\delta}{5}\right) n^2 \right) - o(n^2) =
    \Omega(n^2)
    \enspace,
\end{align*}
for $\delta < \frac{1}{8}$.
\end{proof}

With Lemmas \ref{lem:bound-1}--\ref{lem:bound-3} at hand, we are now ready to complete the analysis and show that the graph $H$ fulfills Definition~\ref{def:lb-graph} of a lower bound graph.

\begin{theorem}
\label{thm:lower-bound-graph}
With high probability, the output of Algorithm~1 is a $(\Omega(n^2), \Order(n^{3/2}))$-lower-bound graph.
\end{theorem}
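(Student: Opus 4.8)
The plan is to verify that the output graph $H$ of Algorithm~1 satisfies each of the four conditions in Definition~\ref{def:lb-graph}, with the quantitative parameters $k = \Omega(n^2)$ and $m = \Order(n^{3/2})$, all with high probability. The sets $\EPS_i$ from Item~2 will be exactly the edge sets $E_K$ of the copies $K \in \mathcal{H}$ retained by the peeling process; by construction each $G[\EPS_i] = G[E_K]$ is a $K_{2,2}$, so Item~2 holds by definition, and $k = |\mathcal{H}| = \Omega(n^2)$ with high probability by Lemma~\ref{lem:bound-3}. It remains to bound $|E(H)| \le m$, to establish the ``no cross-$K_{2,2}$'' property of Item~3, and to verify the bipartiteness of $H_A$ and $H_B$ in Item~4.

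\textbf{Bounding $m = |E(H)|$.} Since $H$ is a subgraph of the random bipartite graph $G$ on $A \cup B$ with edge probability $p = 1/\sqrt{n}$, we have $|E(H)| \le |E(G)|$, and $\Exp |E(G)| = n^2 p = n^{3/2}$. A Chernoff bound on $|E(G)|$ (the edges being independent) gives $|E(G)| = \Order(n^{3/2})$ with high probability, so $m = \Order(n^{3/2})$ as claimed, consistent with Item~1.

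\textbf{Item~3 (no cross-$K_{2,2}$).} This is where the forbidden-pair bookkeeping of the peeling process pays off. Suppose for contradiction that for some $i \ne j$ with $K_i = \{a_1,a_2,b_1,b_2\}$ and $K_j = \{a_1',a_2',b_1',b_2'\}$ in $\mathcal{H}$, the four vertices $a_1,a_2,b_1',b_2'$ induce a $K_{2,2}$ in $G$ — i.e. $\{a_1,a_2,b_1',b_2'\} \in \mathcal{K}$. Consider which of $K_i, K_j$ was inserted into $\mathcal{H}$ first. If $K_i$ was inserted first, then at that moment the algorithm executed the loop ``for every $\{a_1,a_2,b_3,b_4\} \in \mathcal{K}(\{a_1,a_2\})$, add $\{b_3,b_4\}$ to $F_B$''; since $\{a_1,a_2,b_1',b_2'\} \in \mathcal{K}(\{a_1,a_2\})$, the pair $\{b_1',b_2'\}$ was added to $F_B$, so $K_j$ could never subsequently pass the test $\{b_1',b_2'\} \notin F_B$ — contradiction. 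If instead $K_j$ was inserted first, then the loop ``for every $\{a_3,a_4,b_1',b_2'\} \in \mathcal{K}(\{b_1',b_2'\})$, add $\{a_3,a_4\}$ to $F_A$'' added $\{a_1,a_2\}$ to $F_A$, so $K_i$ could not pass $\{a_1,a_2\} \notin F_A$ — again a contradiction. The symmetric case with $a_1',a_2',b_1,b_2$ inducing a $K_{2,2}$ is handled identically. (One should also note that distinct $K \in \mathcal{H}$ have distinct $A$-pairs and distinct $B$-pairs, since once $K$ is inserted its own pairs land in $F_A$, $F_B$; this makes the $\EPS_i$ genuinely indexed by pairs and makes the statement of Item~3 meaningful.) Hence Item~3 holds deterministically given the bookkeeping — no probability is needed here.

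\textbf{Item~4 (bipartiteness of $H_A$, $H_B$).} Here the random sampling $A' \subseteq A$, $B' \subseteq B$ does the work: a $K$ is inserted into $\mathcal{H}$ only if $|\{a_1,a_2\} \cap A'| = 1$, so every edge of $H_A$ joins a vertex of $A'$ to a vertex of $A \setminus A'$; thus $A' \mathbin{\dot\cup} (A \setminus A')$ is a proper bipartition of $H_A$, and symmetrically for $H_B$ with $B'$. So $H_A$ and $H_B$ are bipartite deterministically given the sampling. Finally, combining: condition~1 holds w.h.p.\ by the Chernoff bound on $|E(G)|$; condition~2 holds by construction with $k = |\mathcal{H}| = \Omega(n^2)$ w.h.p.\ by Lemma~\ref{lem:bound-3}; condition~3 holds deterministically; condition~4 holds deterministically. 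Taking a union bound over the two high-probability events (from the edge count and from Lemma~\ref{lem:bound-3}, the latter itself folding in Lemmas~\ref{lem:bound-1} and~\ref{lem:bound-2}) completes the proof.

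\textbf{Main obstacle.} None of the individual steps is technically hard; the one that requires the most care is Item~3, where one must be scrupulous about the order of insertion and about the fact that the forbidden sets $F_A, F_B$ are populated at insertion time using the \emph{full} families $\mathcal{K}(\{a_1,a_2\})$ and $\mathcal{K}(\{b_1,b_2\})$ (not just the currently-retained ones), which is exactly what guarantees that no later cross-pair can survive. A secondary point to double-check is that $\Order(n^{3/2})$ is the right bound for $m$ and is consistent with the size guarantee needed in Theorem~\ref{thm:lb-two-party} — i.e. that $k/m = \Omega(n^2)/\Order(n^{3/2}) = \Omega(\sqrt{n})$, which is what ultimately yields the $\Omega(\sqrt{n}/\B)$ lower bound.
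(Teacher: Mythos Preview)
Your proposal is correct and follows essentially the same approach as the paper: verify each item of Definition~\ref{def:lb-graph} in turn, using a Chernoff bound on $|E(G)|$ for Item~1, Lemma~\ref{lem:bound-3} for the $k=\Omega(n^2)$ count in Item~2, the $F_A/F_B$ bookkeeping for Item~3, and the $A',B'$ sampling for Item~4. Your treatment of Item~3 is in fact more careful than the paper's own proof---you explicitly handle both insertion orderings of $K_i$ and $K_j$, whereas the paper leaves the second ordering implicit.
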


\begin{proof}
We need to check that all items of Definition~\ref{def:lb-graph} are fulfilled with $p = \frac1{\sqrt{n}}$. Concerning Item~1, observe that graph $G$ has $\Order(n^2 p) = \Order(n^{3/2})$ edges with high probability (by a Chernoff bound).

For each $K \in \mathcal{H}$, let $E_K$ denote the edge set added to graph $H$ as in Step~3 of the algorithm. Item~2 holds, since $E(H) = \bigcup_{K \in \mathcal{H}} E_K$, and $H[E_K]$ is isomorphic to $K_{2,2}$, for every $K$, and by Lemma~\ref{lem:bound-3}.

Concerning Item~3, observe that when $K = \{a_1, a_2, b_1, b_2 \}$ is inserted into $\mathcal{H}$, then every $\{a_1, a_2, b_3, b_4 \}$ such that $G[\{a_1, a_2, b_3, b_4 \}]$ is isomorphic to $K_{2,2}$ will not be inserted at a later stage, since $\{b_3, b_4\}$ is inserted into $F_B$. For the same reason, every $\{a_3, a_4, b_1, b_2 \}$ such that $G[\{a_3, a_4, b_1, b_2 \}]$ is isomorphic to $K_{2,2}$ will not be inserted into $\mathcal{H}$. This proves Item~3.

Concerning Item~4, observe that for every $\{a_1, a_2, b_1, b_2 \}$ that is included in $\mathcal{H}$, we have $|\{a_1, a_2\} \cap A'| = |\{b_1, b_2 \} \cap B'| = 1$. Hence, $H_A$ and $H_B$ as defined in Item~4 are bipartite.
\end{proof}

%---------------------------------------------------------------------------------------------------------------------------------------------------------

\section{Two-party communication protocol for listing all cliques}
\label{sec:upper-bound}

We consider a two-party communication protocol in the vertex partition model for listing all cliques (of all sizes) in a given graph. The input consists of an undirected graph $G=(V, E)$ with an arbitrary vertex partition $V = V_A \ \dot{\cup} \ V_B$. Let $\cut$ be the $(V_A, V_B)$-cut, $E_A$ be the edge set of $G[V_A]$, and $E_B$ be the edge set of $G[V_B]$. We consider a scenario where Alice is given the subgraph $G_A=(V, E_A \cup \cut) \subseteq G$ and Bob is given $G_B = (V, E_B \cup \cut) \subseteq G$. The objective is for Alice and Bob to detect all cliques (of all sizes) of $G$ and to minimize the number of bits communicated.

We show that in such framework, there is a two-party communication protocol for listing all cliques (of all sizes) that uses $\Order(\sqrt{n} \, \scut)$ bits of communication, where $\cut$ are the edges shared by Alice and Bob. This shows that we cannot improve our lower bounds for the $K_{\ell}$-detection problem, for $\ell = \Order(\sqrt{n})$, in the \congest model (cf. Theorem \ref{thm:lb-Kl}) using the two-party communication framework in the vertex partition model.

Observe that without any communication between the two players, Alice can detect every clique that contains at most one vertex of $V_B$, and, similarly, Bob can detect every clique that contains at most one vertex of $V_A$ (in particular, listing all triangles does not require any communication). Our task is hence to detect every clique consisting of at least two $V_A$ vertices and at least two $V_B$ vertices. We consider two cases:

\begin{enumerate}
\item Suppose that $\scut \ge n^{3/2}$. Then Alice sends all edges $E_A$ to Bob by encoding all entries in the adjacency matrix of $G[V_A]$, which requires at most $n^2 \le \sqrt{n} \scut$ bits. Since Bob then knows the entire graph $G$, he can detect all cliques.

\item Suppose that $\scut < n^{3/2}$. For any vertex $v \in V$, let $d_v$ be the number of edges of $\cut$ incident to $v$, let $V_{\le \sqrt{n}} \subseteq \{ v \in V_A \, : \, d_v \le \sqrt{n} \}$, and let $V_{> \sqrt{n}} = V_A \setminus V_{\le \sqrt{n}}$. We first show how to detect every clique that contains at least one vertex of $V_{\le \sqrt{n}}$. Then, we show how to detect every clique that does not contain any vertex of $V_{\le \sqrt{n}}$.
    \begin{enumerate}
    \item For every $v \in V_{\le \sqrt{n}}$, Bob sends the induced subgraph $G_B[ \Gamma_G(v) \cap V_B]$ (its adjacency matrix) to Alice (observe that Bob knows the set $V_{\le \sqrt{n}}$ without communication). This requires at most $\sqrt{n} \, \scut$ bits, since
        %$$\sum_{v \in V_{\le \sqrt{n}}} |E(G_B[ \Gamma_G(v) \cap V_B])| \le  \sum_{v \in V_{\le \sqrt{n}}} d_v^2
        \begin{displaymath}
            \sum_{v \in V_{\le \sqrt{n}}} d_v^2
                \le
            \sqrt{n} \sum_{v \in V_{\le \sqrt{n}}} d_v
                \le
            \sqrt{n} \, \scut
                \enspace.
        \end{displaymath}
        Alice can thus detect any clique that contains at least one vertex of $V_{\le \sqrt{n}}$.

    \item Observe that $|V_{> \sqrt{n}}| \le \frac{\scut}{\sqrt{n}}$. Alice sends the entire subgraph $G_A[V_{> \sqrt{n}}]$ (again, its adjacency matrix) to Bob. This requires at most $\sqrt{n}\,\scut$ bits, since
        \begin{displaymath}
            |V_{> \sqrt{n}}|^2
                \le
            \left( \frac{\scut}{\sqrt{n}} \right)^2
                \le
            \scut \cdot \frac{\scut}{n}
                \le
            \sqrt{n}\scut
                \enspace,
        \end{displaymath}
        using the assumption $\scut \le n^{3/2}$. Bob can thus detect every clique that does not contain any vertex of $V_{\le \sqrt{n}}$.
    \end{enumerate}
\end{enumerate}

We thus obtain the following theorem:

\begin{theorem}
\label{thm:upper-bound}
There is a two-party communication protocol in the vertex partition model for listing all cliques (of all sizes) that uses $\Order(\sqrt{n}\,\scut)$ communication rounds, where $\cut$ is the set of shared edges between Alice and Bob.
\end{theorem}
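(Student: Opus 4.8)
The plan is to design the protocol in two regimes according to the size of the shared cut, exactly mirroring the structure already sketched in Section~\ref{sec:upper-bound}, and to verify that in each regime the total communication is $\Order(\sqrt{n}\,\scut)$ bits. First I would observe that cliques with at most one vertex on one side require no communication: Alice already knows all of $G_A=(V,E_A\cup\cut)$, hence she can list every clique that uses at most one vertex of $V_B$, and symmetrically for Bob; so it suffices to detect cliques containing at least two vertices of $V_A$ and at least two of $V_B$. This reduction is the conceptual starting point and is essentially immediate.

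Next I would split on $\scut$. If $\scut \ge n^{3/2}$, Alice simply transmits the adjacency matrix of $G[V_A]$ using $n^2 \le \sqrt{n}\,\scut$ bits; Bob then holds all of $G$ and lists everything locally. If $\scut < n^{3/2}$, I would classify each $v\in V_A$ by its cut-degree $d_v$ (number of $\cut$-edges at $v$), setting $V_{\le\sqrt n}=\{v\in V_A: d_v\le\sqrt n\}$ and $V_{>\sqrt n}=V_A\setminus V_{\le\sqrt n}$; both players can compute this partition without communication since cut-degrees are determined by $\cut$, which both hold. Then (a) for each $v\in V_{\le\sqrt n}$, Bob sends Alice the adjacency matrix of $G_B[\Gamma_G(v)\cap V_B]$, which costs $\sum_{v\in V_{\le\sqrt n}} d_v^2 \le \sqrt n\sum_v d_v \le \sqrt n\,\scut$ bits and lets Alice, who now knows for each such $v$ the full neighborhood structure among $v$'s neighbors, detect every clique touching $V_{\le\sqrt n}$; and (b) since $|V_{>\sqrt n}|\le \scut/\sqrt n$, Alice sends Bob the adjacency matrix of $G_A[V_{>\sqrt n}]$, costing $|V_{>\sqrt n}|^2 \le (\scut/\sqrt n)^2 = \scut\cdot(\scut/n) \le \sqrt n\,\scut$ bits (using $\scut\le n^{3/2}$), after which Bob can detect every remaining clique, i.e.\ one whose $V_A$-part lies entirely in $V_{>\sqrt n}$.

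The main thing to get right — and the only real obstacle — is the correctness argument for case~(a): I must argue that the information Bob sends genuinely suffices for Alice to certify a clique with $\ge 2$ vertices on each side. The point is that any such clique $C$ has some vertex $v\in V_A\cap C$; if $v\in V_{\le\sqrt n}$ then all of $C$ lies in $\{v\}\cup(\Gamma_G(v))$, and Alice already knows all edges incident to $V_A$ (from $G_A$) and now also knows all edges of $G_B$ among $\Gamma_G(v)\cap V_B$, so she has every edge inside $C$; if instead $V_A\cap C\subseteq V_{>\sqrt n}$, then after step~(b) Bob knows every edge inside $C$ (he already had the $V_B$-internal and cut edges, and now the $V_A$-internal ones among $V_{>\sqrt n}$). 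Together with the no-communication observation this covers all cliques. The arithmetic bounds are the three displayed inequalities above, each a one-line consequence of the defining thresholds, so nothing deeper is needed; summing the three phases gives $\Order(\sqrt n\,\scut)$ total communication, which is Theorem~\ref{thm:upper-bound}.

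Finally I would note that ``communication rounds'' in the statement should be read as ``bits of communication'' (the protocol is a small constant number of message exchanges), matching the informal discussion in Section~\ref{subsec:techniques}, and that this matches, up to constant factors, the $\Omega(\sqrt n\,\scut)$ lower bound implicit in Theorem~\ref{thm:lower-bound-graph}, establishing tightness of the two-party vertex-partition approach.
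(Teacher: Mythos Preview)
Your proposal is correct and follows the paper's own argument essentially verbatim: the same no-communication observation, the same threshold split at $\scut=n^{3/2}$, the same partition of $V_A$ by cut-degree $\sqrt{n}$, and the same three one-line inequalities bounding the cost. Your added correctness justification for step~(a) and the remark that ``rounds'' should read ``bits'' are welcome clarifications but do not depart from the paper's approach.
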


%---------------------------------------------------------------------------------------------------------------------------------------------------------

\section{Conclusions}
\label{sec:conclusions}

In this paper, we gave a non-trivial lower bound for the problem of detecting a clique $K_{\ell}$, for $\ell \ge 4$, in the classical distributed \congest model. We show that detecting $K_{\ell}$ requires $\Omega(\frac{n}{(\ell + \sqrt{n}) \, \B})$ communication rounds, for every $\ell \ge 4$, where $\B$ is the bandwidth of the communication links. Our lower bound is complemented by a matching upper bound obtained by a two-party communication protocol in the vertex partition model for listing all cliques (of all sizes). This demonstrates that our lower bound cannot be improved using the two-party communication framework.

We leave as a great open question whether the complexity of clique detection in the \congest model is sublinear, or one needs $\ThetaT(n)$ communication rounds to detect even a copy of $K_4$. Since it seems that the two-party communication approach used in our lower bound cannot be improved further, we do not have any intuition whether the lower bound is tight, or could be improved significantly. On the other hand, the recent $\OrderT(n^{2/3})$-communication rounds algorithm for detecting a triangle \cite{IL17} raises some hopes that maybe also $K_4$ could be detected in a sublinear number of rounds.

%Izumi and Le Gall \cite{IL17}

%---------------------------------------------------------------------------------------------------------------------------------------------------------

%\bigskip\bigskip\bigskip\bigskip\bigskip
%\cite{*}
%\cite{BO15,BP11,CFSV16,CKKLPS15,DLP12,DKO14,EFFGLMMOORT17,FGO17,FGO17b,FO17,GO17,IL17,KN97,KOS17,KR17,KS92,P00,PPPR017a,PPPR017,POS16}

%---------------------------------------------------------------------------------------------------------------------------------------------------------

\newcommand{\Proc}{Proceedings of the\xspace}
\newcommand{\STOC}{Annual ACM Symposium on Theory of Computing (STOC)}
\newcommand{\FOCS}{IEEE Symposium on Foundations of Computer Science (FOCS)}
\newcommand{\SODA}{Annual ACM-SIAM Symposium on Discrete Algorithms (SODA)}
\newcommand{\COCOON}{Annual International Computing Combinatorics Conference (COCOON)}
\newcommand{\DISC}{International Symposium on Distributed Computing (DISC)}
\newcommand{\ESA}{Annual European Symposium on Algorithms (ESA)}
\newcommand{\ICALP}{Annual International Colloquium on Automata, Languages and Programming (ICALP)}
\newcommand{\IPL}{Information Processing Letters}
\newcommand{\JACM}{Journal of the ACM}
\newcommand{\JALGORITHMS}{Journal of Algorithms}
\newcommand{\JCSS}{Journal of Computer and System Sciences}
\newcommand{\OPODIS}{International Conference on Principles of Distributed Systems (OPODIS)}
\newcommand{\PODC}{Annual ACM Symposium on Principles of Distributed Computing (PODC)}
\newcommand{\SICOMP}{SIAM Journal on Computing}
\newcommand{\SPAA}{Annual ACM Symposium on Parallelism in Algorithms and Architectures (SPAA)}
\newcommand{\STACS}{Annual Symposium on Theoretical Aspects of Computer Science (STACS)}
\newcommand{\TALG}{ACM Transactions on Algorithms}
\newcommand{\TCS}{Theoretical Computer Science}

\bibliography{references}

%---------------------------------------------------------------------------------------------------------------------------------------------------------

\end{document}